\newcommand{\email}[1]{\protect\href{mailto:#1}{#1}}
\colorlet{mylinkcolor}{green!50!black}
\newcommand{\proofbox}{\vbox{\hrule height0.6pt\hbox{\vrule height1.3ex width0.6pt\hskip0.8ex\vrule width0.6pt}\hrule height0.6pt}}
\theoremstyle{nonumberplain}
\newtheorem{proof}{Proof}
\theoremstyle{plain}
\newtheorem{theorem}{Theorem}
\newtheorem{lemma}[theorem]{Lemma}
\theoremstyle{plain}
\newtheorem{scenario}{Scenario}
\newtheorem{definition}{Defn.}
\theoremstyle{plain}
\crefname{claim}{Claim}{Claims}
\tikzset{%
  gnode/.style={shape=circle,minimum size=3mm,fill,draw=black}
}
\newcommand{\TpgEdges}{

  \node (i) at (0,1.5) {i};
  \node (j) at (2,1.75) {j};

  \foreach \k in {2,...,5}
  \draw (i) -- (1,0.5*\k-0.5);

  \foreach \k in {3,...,6}
  \draw (j) -- (1,0.5*\k-0.5);
}
\newcommand{\TpgNodes}{

  \foreach \i in {1,...,4}
  \node [color=red!25,gnode] at (0,0.5*\i) {.};

  \foreach \k in {1,...,6}
  \node [color=blue!25,gnode] at (1,0.5*\k-0.5) {.};

  \foreach \j in {1,...,4}
  \node [color=green!25,gnode] at (2,0.5*\j-0.25) {.};

  \node at (i) {$i$};
  \node at (j) {$j$};
}
\newcommand{\Real}{\mathbb{R}}
\newcommand{\Note}[1]{\textcolor{red}{#1}}
\newcommand{\qtext}[1]{\quad\text{#1}\quad}
\newcommand{\MR}[2]{\lowercase{#1}_{#2*}} 
\newcommand{\MC}[2]{\lowercase{#1}_{*#2}} 
\DeclareMathOperator{\sgn}{sgn}
\newcommand{\Exp}[1]{\mathbb{E}[#1]}
\DeclareMathOperator{\Prob}{Pr}
\newcommand{\nnz}[1]{\text{nnz}(#1)}
\DeclareMathOperator{\Deg}{deg}
\newcommand{\eps}{\varepsilon}
\newcommand{\bound}{K}
\newcommand{\dotprod}[2]{\vec{#1} \cdot \vec{#2}}
\newcommand{\storage}[1]{\text{storage}(#1)}
\newcommand{\hideme}[1]{}
\begin{document}

\title{Diamond Sampling for Approximate\\ Maximum All-pairs Dot-product (MAD) Search}

 \author{%
 \IEEEauthorblockN{Grey Ballard, Tamara G. Kolda, and Ali Pinar}%
 \IEEEauthorblockA{Data Sciences \& Cyber Analytics Department\\%
 Sandia National Laboratories\\Livermore, CA\\%
 \email{gmballa@sandia.gov}, \email{tgkolda@sandia.gov}, \email{apinar@sandia.gov}}%
 \and
 \IEEEauthorblockN{C. Seshadhri}%
 \IEEEauthorblockA{Department of Computer Science\\%
 University of California\\Santa Cruz, CA\\\email{scomandu@ucsc.edu}}%
 }

\maketitle

\begin{abstract}\boldmath
  Given two sets of vectors, $A = \set{\vec{a_1}, \dots, \vec{a_m}}$ and
  $B=\set{\vec{b_1},\dots,\vec{b_n}}$, our problem is to find the top-$t$ dot
  products, i.e., the largest $|\vec{a_i}\cdot\vec{b_j}|$ among all possible
  pairs. This is a fundamental mathematical problem that appears in numerous data
  applications involving similarity search, link prediction, and collaborative filtering.
  We propose a sampling-based approach that avoids direct computation of all $mn$ dot products.
  We select diamonds (i.e.,
  four-cycles) from the weighted tripartite representation of $A$ and
  $B$. The probability of selecting a diamond corresponding to
  pair $(i,j)$ is proportional to $(\vec{a_i}\cdot\vec{b_j})^2$, amplifying the focus on the
  largest-magnitude entries.
  Experimental  results indicate that diamond sampling is orders of magnitude faster than
  direct computation and
  requires far fewer samples than any competing approach. 
  We also apply diamond sampling to the special case of maximum inner product search, and get
  significantly better results than the state-of-the-art hashing methods. 
\end{abstract}

\section{Introduction}
\label{sec:intro}

Finding similar items is a fundamental problem that underlies numerous 
problems in data analysis. Link prediction in a graph can be cast as finding
similar nodes in the graph~\cite{AdAd03,LiKl07}; customers are recommended similar products~\cite{CrKoTu10}; text
analysis often involves finding similar texts~\cite{SaAlBu93,BeDuOb95}; data cleaning
requires removal of entries that are essentially identical~\cite{KaMeCh05}. In these settings,
entities are represented as vectors in high-dimensional feature space, i.e.,
$\vec{v} \in \Real^d$, for some large $d$. Many 
notions of similarity involve dot products, so a measure of distance between
$\vec{v}$ and $\vec{w}$ is $\vec{v}\cdot \vec{w}$. This subsumes
cosine similarity, common-neighbors, database join operations \cite{AnPi05},
frequent itemset mining, data cleaning \cite{KaMeCh05}, etc.
Motivated by these applications,
we study the Maximum All-pairs Dot-product (MAD) problem.

\begin{definition}[$t$-MAD: Max All-pairs Dot-product]%
Given two sets of $d$-dimensional vectors $A = \{\vec{a_1},\ldots,\vec{a_m}\}$
and $B = \{\vec{b_1},\ldots,\vec{b_n}\}$: find the index pair $(i,j)$
that maximizes $|\vec{a_i} \cdot \vec{b_j}|$. More generally, given
additional parameter $t$, find the $t$ index
pairs $\{(i_1, j_1), \ldots, (i_t, j_t)\}$ corresponding
to the $t$ largest dot products.
\end{definition}

It is convenient to think of $A$ and $B$ as matrices ($A \in \Real^{d \times m}$
and $B \in \Real^{d \times n}$), where the columns are the corresponding
vectors. The MAD problem is exactly finding the largest entries
in the product $C = A^TB \in \Real^{m \times n}$. 

The MAD formulation subsumes many existing problems in the literature.
For the special case where $A$ is a single column (equivalently, $m=1$), this is the exactly the MIPS (Maximum Inner
Product Search) problem~\cite{CoLe99,ShLi14}. Here, we maximize the dot product with $\vec{a}$
among all columns of $B$. When $A = B$ is the adjacency matrix of a graph,
this is equivalent to finding pairs of nodes with the most common neighbors,
a fundamental link prediction operation.

\begin{figure}
\centering
\includegraphics[height=2.5in]{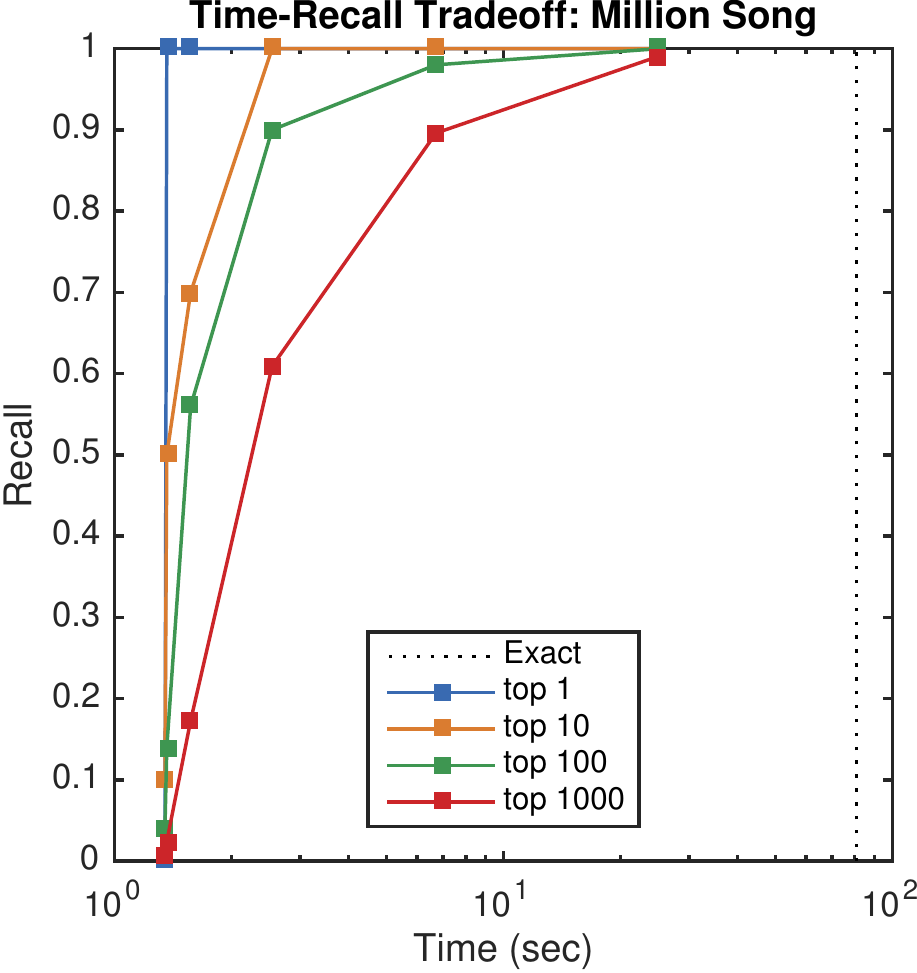}
\caption{\small Million song dataset~\cite{BeElWhLa11,McBeElLa12,echonest}: Dataset with 350K songs, and 48M user-song
entries. Diamond sampling finds top 100 correlated songs within 7 seconds, while exhaustive search takes around
80 seconds.}
\label{fig:awesome}
\end{figure}

\subsection{Difficulties with previous art} \label{sec:chal}

The most obvious approach is to simply compute $A^TB$ exhaustively. There is a rich history
on algorithms for dense and sparse matrix multiplication~\cite{Gu78,AmPa09}, with
implementations in libraries like Intel MKL's BLAS~\cite{MKL14} and CSparse~\cite{Davis06}. However,  existing algorithms become prohibitive as the sizes of $A$ and $B$ grow, even if they are sparse.

There is also much literature in approximate matrix multiplication,
which quickly computes an approximate product $\tilde C \approx A^TB$.
Drineas, Kannan, and Mahoney \cite{DrKa01,DrKaMa06} introduced an approach based on
sampling rows of $A$ and $B$ to minimize Frobenius norm of the error. 
Much study has been done on various sampling strategies~\cite{DrMa05,Sa06,BeWo08,MaZo11,Pa13,HoIp15}.
These methods are also not suited for MAD, since we only care for a few entries (at most, say, 1000)
despite the matrix having dimensions in the millions.

An alternate, popular approach for high-dimensional nearest neighbor search is
some form of dimension reduction, the most famous approach being Indyk and Motwani's Locality Sensitive Hashing
(LSH)~\cite{InMo98,GiInMo99,AnIn08}. Recent results by Shrivastava and Li extend
LSH methods for the MIPS problem~\cite{ShLi14}.
These approaches usually involve randomly hashing the vectors
into a few ``buckets", and then searching for vector pairs that share
many common buckets.
For
high-dimensional data, the maximum dot product is often small in
comparison to the vector norms, so
the similarities are quite small.  
This
means that many hashes are required to find the nearest neighbors,
leading to a storage blowup (typically, two orders of magnitude more than the data). This
can be quite prohibitive even for small data sets.

\subsection{The sampling approach}

We take a different route, and apply \emph{index sampling} methods. The idea is to sample pair $(i,j)$
proportional to some function of the dot product $\vec{a_i}\cdot \vec{b_j}$. With enough samples, we hope to find the
large entries of $C = A^TB$. The earliest application of this idea
is by Cohen and Lewis, who constructed a sampling algorithm for
the MIPS problem~\cite{CoLe97,CoLe99}. 
Their approach samples pairs $(i,j)$ 
proportional to $\vec{a_i} \cdot \vec{b_j}$~\cite{CoLe97,CoLe99}.
Campagna and Pagh give sampling approaches for a variety of distance measures~\cite{CaPa11}.

We stress that no previous result (either sampling based, LSH based, or otherwise) applies directly
to the MAD problem.

\subsection{Our Contributions} \label{sec:result}

{\bf Diamond sampling:} Our main contribution is \emph{diamond sampling}, a new randomized approach to the MAD problem.
This is inspired by recent work by Jha et al for 4-vertex motif detection in large graphs~\cite{JhSePi15}. Their 
idea is to sample 3-paths in graphs to estimate counts of 4-vertex motifs. We generalize that idea
to the matrix product setting, to design a sampling procedure for the MAD problem.
Diamond sampling is able to sample pairs $(i,j)$ proportional to $(\vec{a_i} \cdot \vec{b_j})^2$.
The square term is a critical improvement over Cohen and Lewis; it allows for faster convergence
to the top entries of $C = A^TB$.

{\bf Theoretical analysis:} We give a theoretical analysis of diamond sampling,
and prove concentration bounds on its behavior. Our analysis shows the eventual convergence of 
the sampling to squared entries, with no assumption whatsoever. Previous sampling work
required nonnegativity assumption, or assumed structural correlations among positive entries~\cite{CoLe97,CoLe99}.
We give strong storage bounds on diamond sampling, and show that it requires very little overhead.

{\bf Empirical validation:} 
We apply diamond
sampling on six real-world datasets and show that is extremely efficient. Diamond sampling is orders
of magnitude faster than exact computation and requires far fewer samples compared to other
matrix sampling approaches. \cref{fig:awesome} shows the results of diamond sampling
on a song dataset with 48M user-song entries. We consider $A = B$ to be the matrix
where songs are columns and attempt to find the top correlated songs. We can get the top
100 pairs in an order of magnitude less time than exhaustive computation.

We consider numerous applications in product recommendation and link prediction, and consistently
get to top 10-100 dot products, with a speedup of 10-100X over exact computation. Furthermore,
the number of samples required is much smaller than the Cohen-Lewis approach~\cite{CoLe99}.

{\bf Application to MIPS:} Given recent interest in the MIPS problem, we also apply diamond sampling to a MIPS problem, as used in~\cite{ShLi14}.
We focus on the MovieLens dataset~\cite{movielens}, a collaborative filtering application.
We get significantly better precision-recall curves with a maximum precision of 90-100\% as opposed
to 30-65\% with asymmetric LSH methods. Our running time is a fraction of a second per query. Our diamond sampling requires minimal storage overhead and  much less than LSH methods, which 
require large amounts of memory, easily running into hundred times the dataset
size (requiring a large-memory machine).

\section{Preliminaries and notation}
We use the standard notation that $[n]=\set{1,\dots,n}$.
For $x \in \Real$, the function $\sgn(x) = 1$ if $x \geq 0$ and $-1$
otherwise; i.e., $x = \sgn(x)|x|$.
Let $\vec{v} \in \Real^n$ denote a vector and $M = (m_{ij}) \in
\Real^{m \times n}$ be a matrix.  We denote the vector and
matrix $p$-norms as follows:
\begin{equation}\label{eq:pnorm}
   \|\vec{v}\|_p^p = \sum_{i=1}^n |v_i|^p \qtext{and} 
   \|M\|_p^p = \sum_{i=1}^m \sum_{j=1}^n |m_{ij}|^p.
\end{equation}
Note that the matrix $p$-norms are entrywise rather than the induced norms. 
We let $\nnz{M}$ denote the number of nonzeros in $M$; if $M$
is dense, then we suppose without loss of generality that $\nnz{M} = mn$.

We assume throughout that $A \in \Real^{d \times m}$ and $B \in
\Real^{d \times n}$.
We use $k,\, k' \in [d]$ to index rows of $A$ and $B$. The $k$th rows of
$A$ and $B$ (transposed to column vectors) are denoted by $\MR{A}{k}$
and $\MR{B}{k}$ respectively.
We use $i,\, i' \in [m]$ to index columns of $A$, whose $i$th column is
denoted by $\vec{a_i}$ or $\MC{A}{i}$.
We use $j,\, j' \in [n]$ to index columns of $B$, whose $j$th column is
denoted by $\vec{b_j}$ or $\MC{B}{j}$.
Since, by definition, $C=A^TB$, we have
\begin{displaymath}
  c_{ij} = \vec{a_i} \cdot \vec{b_j} = \sum_{k} a_{ki} b_{kj}
  \qtext{for all} i \in [m],\;  j \in [n].
\end{displaymath}

If $A$ and $B$ are binary, i.e., unweighted adjacency matrices, then
we can consider them as representing a tripartite graph on $m
+ d + n$ nodes; see \cref{fig:tripartite}. From this interpretation, we 
define the neighbor sets, 
\begin{align*}
  \mathcal{N}^A_i & = \set{ k \in [d] | a_{ki} = 1}, &
  \mathcal{N}^A_k & = \set{ i \in [m] | a_{ki} = 1}, \\
  \mathcal{N}^B_j & = \set{ k \in [d] | b_{kj} = 1}, &
  \mathcal{N}^B_k & = \set{ j \in [n] | b_{kj} = 1}.
\end{align*}
Correspondingly, we can define  degrees of the nodes, i.e., 
\begin{align*}
  \Deg_i^A = | \mathcal{N}^A_i | = \| \MC{a}{i} \|_1, &
  \Deg_k^A = | \mathcal{N}^A_k | = \| \MR{a}{k} \|_1, \\
  \Deg_j^B = | \mathcal{N}^B_j | = \| \MC{b}{j} \|_1, &
  \Deg_k^B = | \mathcal{N}^B_k | = \| \MR{b}{k} \|_1.
\end{align*}
Random selection is uniform, i.e., equal probability for all elements
of a discrete set, unless stated otherwise.

\section{Diamond sampling}
\label{sec:diamond}

Complexity of diamond sampling depends on the two input matrices. We start our discussion with the special case of binary matrices $A$ and $B$. 
We follow with the general case and then discuss  other special cases such as nonnegative inputs and computing the maximum in $A^TA$, i.e., $B=A$.

\subsection{Binary inputs}
To motivate our procedure, we start with the case where $A$ and $B$ are binary
matrices.  We can represent this as a tripartite graph where the $m$
columns of $A$, indexed by $i$, correspond to nodes on the left; the
$n$ columns of $B$, indexed by $j$, correspond to nodes on the right;
and the $d$ common rows of $A$ and $B$, indexed by $k$ or $k'$,
correspond to nodes in the center.  Edge $(i,k)$ exists iff $a_{ki} =
1$; likewise, edge $(k,j)$ exists iff $b_{kj} = 1$.
Therefore, $c_{ij}$ is simply the number of common neighbors of $i$ and $j$:
\begin{displaymath}
  c_{ij} = \dotprod{a_i}{b_j} = |\set{k | k \in \mathcal{N}^A_i \cap \mathcal{N}^B_j}|.
\end{displaymath}

\begin{figure}[htbp]
  \centering
  \subfloat[Wedge $(i,k,j)$]{\label{fig:tpg-wedge}
  \begin{tikzpicture}
    \TpgEdges
    \node (k) at (1,0.5*3-0.5) {};
    \draw [very thick] (k)--(i);
    \draw [very thick] (k)--(j);
    \TpgNodes
    \node at (k) {$k$};
    \node [shape=circle,draw=black,minimum size=4mm,very thick] at (i) {};
    \node [shape=circle,draw=black,minimum size=4mm,very thick] at (j) {};
    \node [shape=circle,draw=black,minimum size=4mm,very thick] at (k) {};
  \end{tikzpicture}}
~
  \subfloat[Three-path $(k',i,k,j)$]{\label{fig:tpg-threepath}
  \begin{tikzpicture}
    \TpgEdges
    \node (kp) at (1,0.5*2-0.5) {};
    \node (k) at (1,0.5*5-0.5) {};
    \draw [very thick] (k)--(i);
    \draw [very thick] (k)--(j);
    \draw [very thick] (i)--(kp);
    \TpgNodes
    \node at (k) {$k$};
    \node at (kp) {$k'$};
    \node [shape=circle,draw=black,minimum size=4mm,very thick] at (i) {};
    \node [shape=circle,draw=black,minimum size=4mm,very thick] at (j) {};
    \node [shape=circle,draw=black,minimum size=4mm,very thick] at (k) {};
    \node [shape=circle,draw=black,minimum size=4mm,very thick] at (kp) {};
  \end{tikzpicture}}
~
  \subfloat[Diamond $(k',i,k,j)$]{\label{fig:tpg-diamond}
  \begin{tikzpicture}
    \TpgEdges
    \node (k) at (1,0.5*3-0.5) {};
    \node (kp) at (1,0.5*5-0.5) {};
    \draw [very thick] (k)--(i);
    \draw [very thick] (k)--(j);
    \draw [very thick] (kp)--(i);
    \draw [very thick] (kp)--(j);
    \TpgNodes
    \node at (k) {$k$};
    \node at (kp) {$k'$};
    \node [shape=circle,draw=black,minimum size=4mm,very thick] at (i) {};
    \node [shape=circle,draw=black,minimum size=4mm,very thick] at (j) {};
    \node [shape=circle,draw=black,minimum size=4mm,very thick] at (k) {};
    \node [shape=circle,draw=black,minimum size=4mm,very thick] at (kp) {};
  \end{tikzpicture}}
\caption{\small Illustration of tripartite graph. For simplicity, we show
  only those edges incident nodes $i$ or $j$.}
  \label{fig:tripartite}
\end{figure}
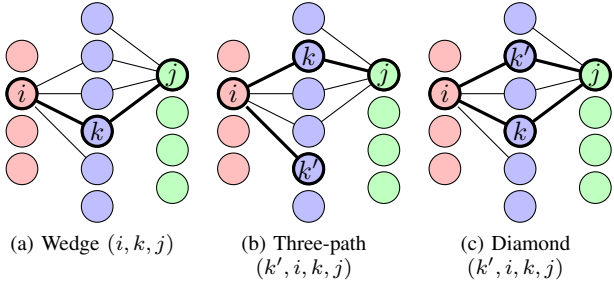

If node $k$ has an $A$-neighbor $i$ and a $B$-neighbor $j$, then we
call $(i,k,j)$ a ``wedge.''
The existence of such a wedge implies that $c_{ij} \geq 1$. In fact, there are exactly
$c_{ij}$ distinct wedges connecting pair $(i,j)$; see \cref{fig:tpg-wedge}.
The probability of selecting a 
random wedge with endpoints $(i,j)$ can be shown to be proportional to
$c_{ij}$ \cite{CoLe97,CoLe99}.

In diamond sampling, our goal is find a ``diamond'' $(k',i,k,j)$
formed by \emph{two} intersecting wedges, i.e., $(i,k,j)$ and
$(i,k',j)$; see \cref{fig:tpg-diamond}. Note that any pair $(i,j)$
participates in $c_{ij}^2$ diamonds (note that we are not requiring $k$ and $k'$ to be different). Hence, the probability of
selecting a random diamond of the form $(k',i,k,j)$ is proportional to
$c_{ij}^2$.

Sampling random diamonds will expedite  identifying  the largest
dot products as compared to sampling random wedges; however, sampling random
diamonds is more complex.  Thankfully, we can adapt the arguments of
Jha et al.~\cite{JhSePi15} for this purpose. Here, the goal is to find a random
three-path of the form $(k',i,k,j)$. If it closes to form a
four-cycle, then it is a random diamond.  Moreover, these samples will be uncorrelated.  That is, given a set of random 3-paths,  those that complete to a diamond will form a uniform sample of the  diamonds. 
See \cref{fig:tpg-diamond}
for a three-path that closes to form a diamond and
\cref{fig:tpg-threepath} for one that does not.

Finding a random three-path of the
form $(k',i,k,j)$ is a multi-step procedure, shown in
\cref{alg:diamond-binary} and illustrated in \cref{fig:diamond}.
In \cref{line:dbw}, we weight each edge $(i,k)$ according
to the number of three paths it is the center of, i.e., $\Deg^A_i
\Deg^B_k$ (again we do not require $k\neq k'$), and store the weights in a matrix $W$.
Observe that $W$ has the same sparsity pattern as $A$.
In \cref{line:dbs1}, we select a random
edge $(i,k)$ proportional to its weight (see \cref{fig:step1}).
To complete the three-path, we select a random neighbor of $k$ in
$B$, labeled $j$ in \cref{line:dbs2} (see \cref{fig:step2}) and a random neighbor of $i$
in $A$, labeled $k'$ in \cref{line:dbs3} (see \cref{fig:step3}).  This yields a uniform random three-path. If edge
$(k',j)$ exists, i.e., $b_{k'j} = 1$, then the three-path is a diamond
and so we increment the counter
$x_{ij}$ in \cref{line:dbs4}; obviously, $\nnz{X} \leq s$. 

\begin{algorithm}
  Given matrices $A \in \set{0,1}^{m \times d}$ and $B \in
  \set{0,1}^{n \times d}$. \\
  Let $s$ be the number of samples.
  \begin{algorithmic}[1]
    \For{$(k,i) \in [d] \otimes [m]$}
    \State {\label{line:dbw}}%
    $w_{ki} \gets a_{ki} \, \Deg_i^A \, \Deg_k^B$
    \EndFor
    \State $X \gets$ all-zero matrix of size $m \times n$
    \For{$\ell=1,\dots,s$}
    \State {\label{line:dbs1}}%
    Sample $(k,i)$ with probability $w_{ki} /\|W\|_1$
    \State {\label{line:dbs2}}%
    Sample $j$ from $\mathcal{N}^B_k$
    \State {\label{line:dbs3}}%
    Sample $k'$ from $\mathcal{N}^A_i$
    \State {\label{line:dbs4}}%
    $x_{ij} \gets x_{ij} +  b_{k'j}$
    \EndFor
    \State Postprocessing (see \cref{alg:post})
  \end{algorithmic}
  \caption{Diamond sampling with binary inputs}
  \label{alg:diamond-binary}
\end{algorithm}

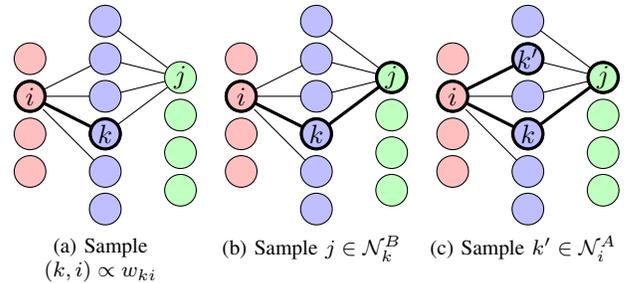
\begin{figure}[htbp]
  \centering
  \subfloat[Sample $(k,i) \propto w_{ki}$]{\label{fig:step1}
  \begin{tikzpicture}
    \TpgEdges
    \node (k) at (1,0.5*3-0.5) {};
    \draw [very thick] (k)--(i);
    \TpgNodes
    \node at (k) {$k$};
    \node [shape=circle,draw=black,minimum size=4mm,very thick] at (i) {};
    \node [shape=circle,draw=black,minimum size=4mm,very thick] at (k) {};
  \end{tikzpicture}}
~
  \subfloat[Sample $j \in \mathcal{N}^B_k$]{\label{fig:step2}
  \begin{tikzpicture}
    \TpgEdges
    \node (k) at (1,0.5*3-0.5) {};
    \draw [very thick] (k)--(i);
    \draw [very thick] (k)--(j);
    \TpgNodes
    \node at (k) {$k$};
    \node [shape=circle,draw=black,minimum size=4mm,very thick] at (i) {};
    \node [shape=circle,draw=black,minimum size=4mm,very thick] at (j) {};
    \node [shape=circle,draw=black,minimum size=4mm,very thick] at (k) {};
  \end{tikzpicture}}
~
  \subfloat[Sample $k' \in \mathcal{N}^A_i$]{\label{fig:step3}
  \begin{tikzpicture}
    \TpgEdges
    \node (k) at (1,0.5*3-0.5) {};
    \node (kp) at (1,0.5*5-0.5) {};
    \draw [very thick] (k)--(i);
    \draw [very thick] (k)--(j);
    \draw [very thick] (kp)--(i);
    \TpgNodes
    \node at (k) {$k$};
    \node at (kp) {$k'$};
    \node [shape=circle,draw=black,minimum size=4mm,very thick] at (i) {};
    \node [shape=circle,draw=black,minimum size=4mm,very thick] at (j) {};
    \node [shape=circle,draw=black,minimum size=4mm,very thick] at (k) {};
    \node [shape=circle,draw=black,minimum size=4mm,very thick] at (kp) {};
  \end{tikzpicture}}
\caption{\small Illustration of diamond sampling in \cref{alg:diamond-binary}. For simplicity, we show
  only those edges incident nodes $i$ or $j$.}
  \label{fig:diamond}
\end{figure}

The largest values in $X$ correspond to the (likely) largest dot
products, but we do some further postprocessing to obtain the final
answer, as shown in \cref{alg:post}.
We are seeking the top-$t$ dot products. We have a budget of $t'\geq
t$ dot products, where we assume $t' \ll mn$.
We let $\Omega_s$ denote the indices of all the nonzeros in $X$ and
$\Omega_{t'}$ denote the top-$t'$ entries in $X$; this requires a sort
in \cref{line:post1} of at most $s$ items (and generally many fewer,
depending on the proportion of three-paths that close into diamonds). We compute the $t'$ dot products in
\cref{line:post2a,line:post2b,line:post2c} at a cost of $O(t'd)$. 
Finally, we let $\Omega_t$ denote the top-$t$ dot products from
$\Omega_{t'}$ in \cref{line:post3}, requiring a sort of $t'$ items.

\begin{algorithm}
  Given $\Omega_s = \set{ (i,j) | x_{ij} > 0}$.
  Let $t$ be the number of top dot products, and $t' \geq t$ be the
  budget of dot products.
  \begin{algorithmic}[1]
    \State {\label{line:post1}}%
    Extract top-$t'$ entries of $X$, i.e., 
    $|\Omega_{t'}| \leq t'$ and
    \Statex 
    $\Omega_{t'} \gets \set{ (i,j) \in \Omega_s | x_{ij} \geq
      x_{i'j'} \forall (i',j') \in \Omega_s \setminus \Omega_{t'}}$
    \State $C \gets$ all-zero matrix of size $m \times n$
    \For{$(i,j) \in \Omega_{t'}$} {\label{line:post2a}}%
    \State {\label{line:post2b}}%
    $c_{ij} \gets a_i^Tb_j$
    \EndFor {\label{line:post2c}}%
    \State {\label{line:post3}}%
    Extract top-$t$ entries of $C$, i.e., 
    $|\Omega_{t}| \leq t$ and
    \Statex 
    $\Omega_{t} \gets \set{ (i,j) \in \Omega_{t'} | c_{ij} \geq
      c_{i'j'} \forall (i',j') \in \Omega_{t'} \setminus \Omega_{t}}$
  \end{algorithmic}
  \caption{Postprocessing}
  \label{alg:post}
\end{algorithm}

\subsection{General inputs}

We present the binary version as general motivation, but our
implementation and analysis are based on the 
diamond sampling algorithm for general real-valued $A$ and $B$ 
in \cref{alg:diamond}. In this case, we define the matrix of
weights $W \in \Real^{d \times n}$ such that
\begin{displaymath}
  w_{ki} = |a_{ki}|\, \|\MC{A}{i}\|_1  \|\MR{B}{k}\|_1
  \qtext{for all} k \in [d], \, i \in [m].
\end{displaymath}
The weight $w_{ki}$ correspond to the weight of all three paths with edge
$(i,k)$ at its center. This is computed in
\cref{line:dw}.
The sampling in \cref{line:ds1} has the same complexity as in the
binary case, but the sampling in \cref{line:ds2,line:ds3} now has a
nonuniform distribution and so has higher complexity than in the
binary case.
The postprocessing is unchanged.
\begin{algorithm}
  Given matrices $A \in \Real^{m \times d}$ and $B \in
  \Real^{n \times d}$. \\
  Let $s$ be the number of samples.
  \begin{algorithmic}[1]
    \ForAll{$a_{ki}\neq0$}
    \State {\label{line:dw}}%
    $w_{ki} \gets |a_{ki}|\, \| \MC{A}{i} \|_1 \| \MR{B}{k} \|_1$
    \EndFor
    \State $X \gets$ all-zero matrix of size $m \times n$
    \For{$\ell=1,\dots,s$}
    \State {\label{line:ds1}}%
    Sample $(k,i)$ with probability $w_{ki} /\|W\|_1$
    \State {\label{line:ds2}}%
    Sample $j$ with probability $|b_{kj}| / \| \MR{B}{k} \|_1$
    \State {\label{line:ds3}}%
    Sample $k'$ with probability $|a_{k'i}| / \| \MC{A}{i} \|_1$
    \State {\label{line:dx}}%
    $x_{ij} \gets x_{ij} + \sgn(a_{ki} b_{kj} a_{k'i}) \, b_{k'j}$
    \EndFor
    \State {\label{line:post}}%
    Postprocessing (see \cref{alg:post})
  \end{algorithmic}
  \caption{Diamond sampling with general inputs}
  \label{alg:diamond}
\end{algorithm}

\subsubsection{Nonnegative inputs}

If $A$ and $B$ are nonnegative, the only change is that the sign
computations can be ignored in computing the sample increment in
\cref{line:dx} in \cref{alg:diamond}.
This avoids potentially expensive random memory accesses.

\subsubsection{Equal inputs (Gram matrix)}
\label{sec:gram}

If $B=A$, then $C=A^TA$ is symmetric. The matrix $X$ is not symmetric,
although $\Exp{X}$ is. Hence, we modify $X$ before by inserting the
following step before the postprocessing in \cref{line:post} in \cref{alg:diamond}:
\begin{equation}\label{eq:xsym}
  X \gets (X + X^T) / 2.
\end{equation}
Now $X$ is symmetric, and the forthcoming analysis is unaffected.

\subsubsection{Equal symmetric inputs (squared matrix)}
\label{sec:sym}

If $B=A$ and $A$ is symmetric, then $C=A^2$ and we can replace \cref{line:dx} in
\cref{alg:diamond} with the following two lines:
\begin{align*}
  x_{ij} & \gets x_{ij} + \sgn(a_{ki} b_{kj} a_{k'i}) \, b_{k'j} / 2,\\
  x_{kk'} & \gets x_{kk'} + \sgn(a_{ki} b_{kj} a_{k'i}) \, b_{k'j} / 2.
\end{align*}
This exploits the fact that we can swap the role of $k$ and $i$ in the
initial edge sample.
Again, $X$ may not be symmetric, so we 
insert \cref{eq:xsym} before the postprocessing in \cref{line:post}.

\subsection{Complexity and space}

Let $\alpha = \nnz{A}$ and $\beta = \nnz{B}$. In the dense case,
$\alpha = md$ and $\beta = nd$.
The total work is 
\begin{displaymath}
O(\alpha+\beta+s\log(s\alpha\beta)).
\end{displaymath}
The total storage (not counting the inputs $A$ and $B$) is
\begin{displaymath}
  2\,\storage{A} + \storage{B} + 5s + 3t' + 3t.
\end{displaymath}
We give detailed arguments below and in the implementation discussion in \cref{sec:implementation}.

\textbf{Preprocessing.}
For the sampling in \cref{line:dbs2,line:dbs3},
we precompute cumulative, normalized column sums for $B$ and the
same for rows of $A$, requiring storage of
$\storage{A}+\storage{B}$ and computation of $O(\alpha+\beta)$.
The matrix $W$ has the same nonzero pattern as $A$, so 
the cost to store it
is equal to $\storage{A}$ and to
compute it is $O(\alpha)$.

\textbf{Sampling.}
For a straightforward implementation, the cost per sample in
\cref{line:ds1} is $O(\log(\alpha))$.
For \cref{line:ds2}, the cost per sample is
$O(\log(\beta/d))$;
here, we have used the approximation 
$\nnz{\MR{B}{k}} \approx \beta/d$.
A similar analysis applied for $A$ and
\cref{line:ds3}.
So, the cost per sample is 
$O(\log(\alpha) +\log(\beta/d)+ \log(\alpha/m))$.
Without loss of generality, we assume that we need to store the
three-paths and the summand in \cref{line:dbs4} for
a total storage of $5s$. 

\textbf{Postprocessing.}
Conservatively, we require $3t'$ storage for the $(i,j,x_{ij}
\text{ or } c_{ij})$
triples in $\Omega_{t'}$ and $3t$ storage for the $(i,j,c_{ij})$
triples in $\Omega_{t}$. 
The sorting requires at most $O(s\log s)$ time, and usually much less
since $\nnz{X}$ may be much less than $s$ due to only some three-paths
forming diamonds and concentration, i.e., picking the same $(i,j)$
pair multiple times.

\section{Analysis of Diamond Sampling}
\label{sec:analysis}

This section provides a theoretical analysis of diamond sampling.
 We first prove that  the expected value of $x_{ij}$ is $c_{ij}^2 / \|W\|_1$, and then we prove  error bounds on our estimate as a function of the number of samples. 
Unless stated otherwise, our analysis applies to the general version
of the diamond-sampling algorithm (\cref{alg:diamond}).

\subsection{Expectation}

For a single instance of \cref{line:ds1,line:ds2,line:ds3} of
\cref{alg:diamond}, we define the event
\begin{displaymath}
 \mathcal{E}_{k'ikj} = \text{choosing three-path $(k',i,k,j)$}.
\end{displaymath}

\begin{lemma} 
  $\Prob(\mathcal{E}_{k'ikj}) =  | a_{ki} b_{kj} a_{k'i}| / \|W\|_1$.
\end{lemma}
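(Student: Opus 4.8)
The plan is to use the fact that the three draws in \cref{line:ds1,line:ds2,line:ds3} of \cref{alg:diamond} are performed independently conditioned on the preceding draw, and that each step has a convenient normalized form. The event $\mathcal{E}_{k'ikj}$ is precisely the conjunction of three sub-events: \cref{line:ds1} selecting the edge $(k,i)$, \cref{line:ds2} selecting $j$, and \cref{line:ds3} selecting $k'$. Crucially, \cref{line:ds2} draws $j$ using only the row index $k$ fixed in \cref{line:ds1}, and \cref{line:ds3} draws $k'$ using only the column index $i$ fixed in \cref{line:ds1}; hence, once $(k,i)$ is determined, the choices of $j$ and $k'$ are conditionally independent, and the probability of $\mathcal{E}_{k'ikj}$ factors as a product of the three individual selection probabilities.

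First I would write this product directly from the sampling probabilities in the algorithm,
\begin{displaymath}
  \Prob(\mathcal{E}_{k'ikj}) = \frac{w_{ki}}{\|W\|_1} \cdot \frac{|b_{kj}|}{\| \MR{B}{k} \|_1} \cdot \frac{|a_{k'i}|}{\| \MC{A}{i} \|_1},
\end{displaymath}
and then substitute the weight definition $w_{ki} = |a_{ki}|\, \| \MC{A}{i} \|_1 \| \MR{B}{k} \|_1$ from \cref{line:dw}. The key observation is that the normalizer $\| \MR{B}{k} \|_1$ from \cref{line:ds2} is exactly one factor of $w_{ki}$, and $\| \MC{A}{i} \|_1$ from \cref{line:ds3} is the other; both pairs cancel, leaving
\begin{displaymath}
  \Prob(\mathcal{E}_{k'ikj}) = \frac{|a_{ki}|\,|b_{kj}|\,|a_{k'i}|}{\|W\|_1} = \frac{|a_{ki} b_{kj} a_{k'i}|}{\|W\|_1},
\end{displaymath}
which is the claimed identity.

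There is no genuinely hard step: the lemma is essentially a verification that the edge weights in \cref{line:dw} are designed so that the two norm factors telescope against the \cref{line:ds2,line:ds3} normalizers. The only points to state carefully are the conditional-independence factorization described above, and that no division by zero occurs — if $(k,i)$ can be selected at all, then $w_{ki} > 0$, which forces both $\| \MC{A}{i} \|_1 > 0$ and $\| \MR{B}{k} \|_1 > 0$, so the denominators in \cref{line:ds2,line:ds3} are well defined.
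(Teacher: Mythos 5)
Your proposal is correct and follows essentially the same route as the paper's proof: factor $\Prob(\mathcal{E}_{k'ikj})$ into the product of the center-edge probability $w_{ki}/\|W\|_1$ and the two conditionally independent endpoint probabilities, substitute the weight definition from \cref{line:dw}, and cancel the norm factors. Your added remarks on the conditional-independence structure and on why the denominators are nonzero are harmless refinements of the same argument.
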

\begin{proof} The probability of choosing three-path $(k',i,k,j)$ is (by independence
of these choices) the product of the following probabilities:
that of choosing the center edge $(i,k)$, then picking $j$, and then picking $k'$.
  \begin{align*}
    \Prob(\mathcal{E}_{k'ijk}) 
    &= \Prob(\text{ctr $(i,k)$}) \cdot
    \Prob(\text{endpts $j$ and $k'$} | \text{ctr $(i,k)$})  \\
    &= \frac{w_{ki}}{\|W\|_1} \cdot 
    \frac {|b_{kj}|}{\|\MR{b}{k}\|_1} \cdot 
    \frac {|a_{k'i}|}{\|\MC{a}{i}\|_1}\\
    &= \frac{ |a_{ki}|\, \| \MC{A}{i} \|_1 \| \MR{B}{k} \|_1}{\|W\|_1} \cdot 
    \frac {|b_{kj}|}{\|\MR{b}{k}\|_1} \cdot 
    \frac {|a_{k'i}|}{\|\MC{a}{i}\|_1} \\
    & = \frac{ |a_{ki} b_{kj} a_{k'i}| } { \|W\|_1}.
  \end{align*}
\end{proof}

In what follows, we use $X_{i,j,\ell}$ to be the following random variable:
if $i,j$ are the respective indices updated in the $\ell$th iteration,
$X_{i,j,\ell} =\sgn(a_{ki} b_{ki} a_{k'i})b_{k'j}$. Otherwise, $X_{i,j,\ell} = 0$.
Observe that $x_{ij} = \sum_{\ell=1}^s X_{i,j,\ell}$.

\begin{lemma}\label{lem:dexp}
 For diamond sampling,
  $\Exp{x_{ij} / s} = c_{ij}^2 / \|W\|_1$.
\end{lemma}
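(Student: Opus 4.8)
The plan is to compute $\Exp{x_{ij}/s}$ by first analyzing a single iteration and then using linearity of expectation and the fact that $x_{ij} = \sum_{\ell=1}^s X_{i,j,\ell}$. Since the $s$ iterations are i.i.d., each contributes the same expected value, so $\Exp{x_{ij}/s} = \Exp{X_{i,j,1}}$, and it suffices to compute the expected contribution of one sampled three-path to the cell $(i,j)$.

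First I would expand $\Exp{X_{i,j,1}}$ as a sum over all three-paths that update cell $(i,j)$. A three-path $(k',i,k,j)$ updates $x_{ij}$ precisely when its two outer indices land on columns $i$ (of $A$) and $j$ (of $B$); the increment in \cref{line:dx} is $\sgn(a_{ki}b_{kj}a_{k'i})\,b_{k'j}$. Using the previous lemma for $\Prob(\mathcal{E}_{k'ikj}) = |a_{ki}b_{kj}a_{k'i}|/\|W\|_1$, the contribution of each such path to the expectation is the probability times the increment, so summing over the two free inner indices $k,k' \in [d]$ gives
\begin{displaymath}
  \Exp{X_{i,j,1}} = \sum_{k=1}^d \sum_{k'=1}^d
  \frac{|a_{ki}b_{kj}a_{k'i}|}{\|W\|_1}\,\sgn(a_{ki}b_{kj}a_{k'i})\,b_{k'j}.
\end{displaymath}
The key algebraic step is to recognize that $|z|\sgn(z) = z$ for any real $z$, applied to $z = a_{ki}b_{kj}a_{k'i}$, which collapses the product of absolute value and sign back into the signed product $a_{ki}b_{kj}a_{k'i}$. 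This is the crucial cancellation that makes signs work out correctly even without any nonnegativity assumption.

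After that cancellation the summand becomes $(a_{ki}b_{kj}a_{k'i}b_{k'j})/\|W\|_1$, and since the double sum over $k,k'$ factors as a product of two independent single sums, I would write
\begin{displaymath}
  \Exp{X_{i,j,1}} = \frac{1}{\|W\|_1}
  \left(\sum_{k=1}^d a_{ki}b_{kj}\right)
  \left(\sum_{k'=1}^d a_{k'i}b_{k'j}\right)
  = \frac{c_{ij}^2}{\|W\|_1},
\end{displaymath}
using the definition $c_{ij} = \sum_k a_{ki}b_{kj}$ from the preliminaries. Multiplying through by the $s$ i.i.d. copies and dividing by $s$ gives the claim.

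I expect the only genuine subtlety to be the sign bookkeeping in the crucial step $|z|\sgn(z)=z$: it is tempting to think signs must cancel pairwise or require structural assumptions, but in fact the sign factor inside the increment is exactly designed to convert the sampling probability (which depends on magnitudes) back into the true signed product, so that the resulting double sum factors cleanly into $c_{ij}\cdot c_{ij}$. The rest is routine linearity of expectation and factoring a separable double sum, so there is no real analytic obstacle. One bookkeeping point worth stating explicitly is that the increment uses $b_{k'j}$ (not its magnitude), which is what supplies the second factor of $c_{ij}$ rather than a factor involving $\|\MC{b}{j}\|_1$.
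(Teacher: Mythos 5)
Your proposal is correct and follows essentially the same route as the paper's own proof: reduce to a single iteration via linearity of expectation and the i.i.d. structure of the $X_{i,j,\ell}$, expand $\Exp{X_{i,j,1}}$ as a double sum over $k,k'$ using $\Prob(\mathcal{E}_{k'ikj}) = |a_{ki}b_{kj}a_{k'i}|/\|W\|_1$, collapse $|z|\sgn(z)=z$, and factor the separable double sum into $c_{ij}^2$. Your closing remarks on the sign bookkeeping and on the increment being the signed $b_{k'j}$ correctly identify exactly the mechanism the paper relies on, so nothing is missing.
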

\begin{proof} We note that $\Exp{x_{ij}/s}
= \Exp{\sum_\ell X_{i,j,\ell}}/s = \Exp{X_{i,j,1}}$. (We use linearity of expectation and the 
fact that the $X_{i,j,\ell}$ are i.i.d. for fixed $i,j$ and varying $\ell$.)
  \begin{align*} 
    \Exp{X_{i,j,1}}
    & = \sum_{k} \sum_{k'} \Prob \bigl( \mathcal{E}_{k'ikj} \bigr) \cdot
    \sgn(a_{ki} b_{ki} a_{k'j} ) \, b_{k'j} \\
    & = \sum_{k} \sum_{k'} \frac{ |a_{ki} b_{kj} a_{k'i}| } { \|W\|_1} \cdot
    \sgn(a_{ki} b_{kj} a_{k'i}) \, b_{k'j} \\
    & = \frac{1}{\|W\|_1} 
    \sum_k \sum_{k'} a_{ki} b_{kj} a_{k'i}  b_{k'j} \\
    & = \frac{1}{\|W\|_1} 
    \Bigl( \sum_k a_{ki} b_{kj} \Bigr) 
    \Bigl( \sum_{k'} a_{k'i} b_{k'j}  \Bigr) \\
    &= \frac{1}{\|W\|_1} 
    \Bigl( \sum_k a_{ki} b_{kj} \Bigr) ^2 = \frac{c_{ij}^2}{\|W\|_1}.
  \end{align*}
\end{proof}

\subsection{Concentration bounds}

We now provide some concentration bounds
when all entries in $A$ and $B$ are nonnegative.

\begin{lemma} \label{lem:conc} Fix $\eps > 0$ and error probability $\delta \in (0,1)$.
Assume all entries in $A$ and $B$ are nonnegative and at most $\bound$.
If the number of samples 
\[s \geq 3\bound \|W\|_1 \log(2/\delta)/(\eps^2 c^2_{ij}),\]
then 
\[ \Pr[| x_{ij}\|W\|_1/s - c^2_{ij}| > \eps c^2_{ij}|] \leq \delta.\]
\end{lemma}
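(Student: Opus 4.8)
The plan is to recognize $x_{ij}$ as a sum of $s$ independent, identically distributed, bounded random variables and apply a multiplicative Chernoff bound. From the definition of $X_{i,j,\ell}$ preceding \cref{lem:dexp}, we have $x_{ij} = \sum_{\ell=1}^s X_{i,j,\ell}$ with the $X_{i,j,\ell}$ i.i.d.\ for fixed $(i,j)$, and by \cref{lem:dexp} each has mean $\Exp{X_{i,j,1}} = c_{ij}^2/\|W\|_1$. The key observation under the nonnegativity hypothesis is that the sign factor $\sgn(a_{ki}b_{kj}a_{k'i})$ is always $+1$, so $X_{i,j,\ell}$ equals $b_{k'j}$ when $(i,j)$ is the updated pair and $0$ otherwise; in either case $0 \le X_{i,j,\ell} \le \bound$, since every entry of $A$ and $B$ is at most $\bound$.

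First I would rescale to the unit interval by setting $Y_\ell = X_{i,j,\ell}/\bound \in [0,1]$ and $Y = \sum_\ell Y_\ell = x_{ij}/\bound$, so that $\mu := \Exp{Y} = s\,c_{ij}^2/(\bound\,\|W\|_1)$. Then I would invoke the standard two-sided multiplicative Chernoff bound for a sum of independent $[0,1]$-valued variables, which gives
\[ \Pr[\,|Y-\mu| > \eps\mu\,] \le 2\exp(-\eps^2\mu/3) \]
for $\eps \in (0,1)$.

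Next I would enforce $2\exp(-\eps^2\mu/3) \le \delta$, which holds exactly when $\mu \ge 3\log(2/\delta)/\eps^2$; substituting $\mu = s\,c_{ij}^2/(\bound\,\|W\|_1)$ and solving for $s$ reproduces the stated threshold $s \ge 3\bound\|W\|_1\log(2/\delta)/(\eps^2 c_{ij}^2)$. Finally I would translate the deviation event back: multiplying $|Y-\mu|>\eps\mu$ through by $\bound\,\|W\|_1/s$ turns it into $|x_{ij}\|W\|_1/s - c_{ij}^2| > \eps c_{ij}^2$, which is precisely the event in the lemma.

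The computation is essentially routine once the reduction is set up; the only real content is the boundedness claim. The main thing to get right is that nonnegativity does double duty: it kills the sign (so each $X_{i,j,\ell} \ge 0$ and the Chernoff bound for nonnegative variables applies) and it guarantees the uniform upper bound $\bound$. Without nonnegativity the summands could be negative and one would instead need a Hoeffding-type bound over $[-\bound,\bound]$, changing the constants; this is presumably why the hypothesis is imposed here. I would also note that the clean factor $3$ in the sample complexity is inherited directly from the $\exp(-\eps^2\mu/3)$ form of the multiplicative bound, which is valid precisely for $\eps \le 1$.
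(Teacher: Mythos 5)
Your proof is correct and takes essentially the same route as the paper: both arguments use nonnegativity to conclude $X_{i,j,\ell} \in [0,\bound]$, rescale by $\bound$ to get $[0,1]$-valued i.i.d.\ summands, apply the multiplicative Chernoff bound with $\Exp{y_{ij}} = s c_{ij}^2/(\bound\|W\|_1) \geq 3\log(2/\delta)/\eps^2$ via \cref{lem:dexp}, and translate the deviation event back by multiplying through by $\bound\|W\|_1/s$. The only cosmetic difference is that the paper bounds the upper and lower tails separately at $\delta/2$ each and finishes with a union bound, whereas you invoke the packaged two-sided form $2\exp(-\eps^2\mu/3)$; these are the same argument, and your closing remark about the $\eps \leq 1$ caveat applies equally to the paper's version.
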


\begin{proof} Observe that $X_{i,j,\ell}$ is in the range $[0,\bound]$. Thus,
$Y_{i,j,\ell} = X_{i,j,\ell}/\bound$ is in $[0,1]$. Set $y_{ij} = \sum_{\ell} Y_{i,j,\ell}$.
Since $y_{ij}$ is the sum of random variables in $[0,1]$,
we can apply the standard multiplicative Chernoff bound (Theorem 1.1 of~\cite{DuPa}).
This yields $\Pr[y_{ij} \geq (1+\eps)\Exp{y_{ij}}] < \exp(-\eps^2\Exp{y_{ij}}/3)$.
By \cref{lem:dexp}, $\Exp{y_{ij}} = (s/\bound)(c^2_{ij}/\|W\|_1)$, which
is at least $3\log(2/\delta)/\eps^2$ by choice of $s$.
Hence, $\Pr[y_{ij} \geq (1+\eps)\Exp{y_{ij}}] < \delta/2$. 
Note that $y_{ij} = x_{ij}/\bound$.
We multiply the expression inside the $\Pr[\cdot]$ by $\bound\|W\|_1/s$
to get the event $x_{ij}\|W\|_1/s \geq (1+\eps)c^2_{ij}$. 

Using the Chernoff lower tail bound and identical reasoning, we get $\Pr[x_{ij}\|W\|_1/s \leq (1-\eps)c^2_{ij}] \leq \delta/2$.
A union bound completes the proof.
\end{proof}

The following theorem gives a bound on the number of samples required to distinguish ``large" dot products
from ``small" ones. The constant $4$ that appears is mostly out of convenience; it can be replaced
with anything $>\! 1$ with appropriate modifications to $s$.

\begin{theorem} \label{thm:sep} Fix some threshold $\tau$ and error probability $\delta \in (0,1)$. Assume all entries
in $A$ and $B$ are nonnegative and at most $\bound$. Suppose $s \geq 12\bound \|W\|_1 \log(2mn/\delta)/\tau^2$.
Then with probability at least $1-\delta$, the following holds for all indices $i,j$ and $i',j'$: 
if $c_{ij} > \tau$ and $c_{i'j'} < \tau/4$, then $x_{ij} > x_{i'j'}$.
\end{theorem}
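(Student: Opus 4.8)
The plan is to decouple the two families of pairs, control each separately, and glue everything with a single union bound. I would fix $\eps = 1/2$ and introduce the intermediate threshold $\tau^2/2$ on the scale of $x_{ij}\|W\|_1/s$, which \cref{lem:dexp} identifies as an unbiased estimate of $c_{ij}^2$. The whole argument rests on the observation that squaring turns the factor-$4$ gap between the two regimes into a factor-$16$ gap: every ``large'' pair has $c_{ij}^2 > \tau^2$, every ``small'' pair has $c_{i'j'}^2 < \tau^2/16$, and $\tau^2/2$ sits strictly between these. So it suffices to show that, with probability $\geq 1-\delta$, simultaneously $x_{ij}\|W\|_1/s > \tau^2/2$ for all large pairs and $x_{i'j'}\|W\|_1/s < \tau^2/2$ for all small pairs; the conclusion $x_{ij} > x_{i'j'}$ is then immediate. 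To make the guarantee simultaneous I would give each of the at most $mn$ pairs a failure budget of $\delta/(2mn)$, which is precisely why $\log(2mn/\delta)$ and the count $s \geq 12\bound\|W\|_1\log(2mn/\delta)/\tau^2$ appear.

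For the large pairs this is a direct invocation of \cref{lem:conc}. With $\eps = 1/2$ and per-pair error $\delta/(mn)$, the hypothesis of that lemma reads $s \geq 3\bound\|W\|_1\log(2mn/\delta)/(\tfrac14 c_{ij}^2)$; since $c_{ij}^2 > \tau^2$ this is implied by the theorem's assumption on $s$. The lower-tail half of \cref{lem:conc} then gives $x_{ij}\|W\|_1/s > (1-\eps)c_{ij}^2 = c_{ij}^2/2 > \tau^2/2$ except on an event of probability at most $\delta/(2mn)$.

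The subtle part, and the step I expect to be the main obstacle, is the upper bound for the small pairs, because \cref{lem:conc} cannot be applied verbatim there: its relative-error form requires $s = \Omega(1/c_{i'j'}^2)$, which blows up as $c_{i'j'} \to 0$ and is not met by our choice of $s$. I would instead return to the normalized variable $y_{i'j'} = x_{i'j'}/\bound$ from the proof of \cref{lem:conc}, a sum of $s$ i.i.d.\ variables in $[0,1]$ with mean $\Exp{y_{i'j'}} = (s/\bound)(c_{i'j'}^2/\|W\|_1) < (s/\bound)\tau^2/(16\|W\|_1)$, and bound the \emph{absolute} upper-tail event that $y_{i'j'}$ reaches $(s/\bound)\tau^2/(2\|W\|_1)$. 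This target exceeds $8\,\Exp{y_{i'j'}}$, so I would apply the multiplicative Chernoff upper tail in its large-deviation regime (deviation a constant factor above the mean, hence outside the $\eps \le 1$ range used in \cref{lem:conc}). A short monotonicity remark handles the dependence on $c_{i'j'}$: the tail probability is largest when the mean is largest, i.e.\ as $c_{i'j'} \to \tau/4$, so it suffices to check the worst case $\Exp{y_{i'j'}} = (s/\bound)\tau^2/(16\|W\|_1) = \tfrac{3}{4}\log(2mn/\delta)$, where the generous constant $12$ in $s$ makes the resulting bound comfortably smaller than $\delta/(2mn)$.

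Finally I would union-bound: each pair contributes at most one bad event (a lower-tail failure if it is large, an upper-tail failure if it is small), each of probability at most $\delta/(2mn)$, so the total failure probability is at most $\delta$. On the complementary event every large $x_{ij}$ exceeds $\tau^2 s/(2\|W\|_1)$ while every small $x_{i'j'}$ falls below it, which is exactly the claimed ordering.
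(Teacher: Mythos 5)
Your proposal is correct and follows essentially the same route as the paper's proof: \cref{lem:conc} with $\eps=1/2$ and per-pair budget $\delta/(2mn)$ for the large pairs, and an absolute (rather than relative) upper-tail Chernoff bound on $y_{i'j'}=x_{i'j'}/\bound$ at the target $b = s\tau^2/(2\bound\|W\|_1) \geq 6\log(2mn/\delta)$ for the small pairs, followed by a union bound over the at most $mn$ pairs. The paper invokes the specific tail form $\Pr[y_{i'j'}>b]<2^{-b}$ for $b>2e\,\Exp{y_{i'j'}}$ (Theorem 1.1, third part, of~\cite{DuPa}) where you use the generic large-deviation multiplicative bound with a monotonicity remark, but these are interchangeable instantiations of the same step, and you correctly flag that \cref{lem:conc} cannot be applied verbatim to the small pairs (you even work with the correct threshold $\tau/4$ where the paper's proof contains a typographical $\tau/3$).
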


\begin{proof} First consider some dot product $c_{ij}$ with value at least $\tau$. 
We can apply \cref{lem:conc} with $\eps = 1/2$ and error probability $\delta/mn$,
so with probability at least $1-\delta/mn$, $x_{ij}\|W\|_1/s \geq c^2_{ij}/2 \geq \tau^2/2$.
Now consider dot product $c_{i'j'} < \tau/3$. Define $y_{i'j'}$ and $Y_{i',j',\ell}$
as in the proof of \cref{lem:conc}. We can apply the lower tail bound of Theorem 1.1 (third part) of~\cite{DuPa}:
for any $b > 2e\Exp{y_{i'j'}}$, $\Pr[y_{i'j'} > b] < 2^{-b}$. 

We set $b \!= \!s\tau^2/2\bound \|W\|_1$. From \cref{lem:dexp} and the assumption that $c_{i'j'} \!< \!\tau/3$ and $\Exp{y_{i'j'}} \!= \!\Exp{x_{i'j'}}/\bound
\!= \!sc^2_{i'j'}/\bound \|W\|_1 \!\leq \!s\tau^2/(16\bound \|W\|_1) \!< \!b/2e$.
Plugging in our bound for $s$, $b \! \geq \! (12\bound\|W\|_1 \log(2mn/\delta)/\tau^2)\cdot \tau^2/(2\bound \|W\|_1)$
$= \!6\log(2mn/\delta)$. Hence, $\Pr[y_{i'j'}\! > \!b]\! < \!\delta/(2mn)$. 
Equivalently, $\Pr[x_{i'j'}\|W\|_1/s \! >\!  \tau^2/2] \!< \!\delta/(2mn)$.
We take a union bound over all the error probabilities (there are at most $mn$
pairs $i,j$ or $i',j'$). 

In conclusion, with probability at least $1-\delta$, for any pair of indices $i,j$:
if $c_{ij} > \tau$, then $x_{ij}\|W\|_1/s \geq \tau^2/2$. If $c_{ij} < \tau/4$,
then $x_{ij}\|W\|_1/s < \tau^2/2$. This completes the proof. 
\end{proof}

To get a useful interpretation of \cref{lem:conc} and \cref{thm:sep},
we ignore the parameters $\eps$ and $\delta$. Let us also assume that $K=1$,
which is a reasonable assumption for most of our experiments. Basically,
to get a reasonable estimate of $c_{ij}$, we require $\|W\|_1/c^2_{ij}$ samples.
If the value of the $t$-th largest entry in $C$ is $\tau$, we require
$\|W\|_1/\tau^2$ samples to find the $t$-largest entries.  For instance, on a graph, if we want to identify  pairs of vertices with at least 200 common neighbors, we can set $\tau=200$, and $\|W\|_1$ will be the number of (non-induced) 3-paths in the graph. 
The square in the denominator is what makes this approach work. 
In \cref{tab:stats} of \cref{sec:experiments}, we show some of the values of $\|W\|_1/\tau^2$ for particular datasets, where $\tau$ is the magnitude of the largest entry.

\section{Implementation Details}
\label{sec:implementation}

We discuss the implementation details for reproducibility, but we stress that the implementation is not our primary contribution.
Nevertheless, careful thought has gone into the process and 
we show that a clever implementation of the sampling can improve performance by almost $3\times$; see \cref{fig:perf_LO}.

\subsection{Sampling from discrete distributions}
\label{sec:sampl_disc}

We consider two alternative schemes for drawing $s$ samples from an
arbitrary discrete distribution defined by the vector $\vec{\rho} \in
[0,1]^p$ such that $\sum_{k=1}^p \rho_k = 1$.
The choice of schemes is based on the relative sizes of $s$ and $p$.

If the size of the distribution, $p$, is smaller than the number of
samples, $s$, then we use binary search on the cumulative sums of 
$\rho$ to determine each sample. This requires $O(s\log p)$ comparisons, plus
$O(p)$ work for the preprocessing to compute the cumulative sum.
We note that using the alias method \cite{Vo91} yields a constant time per search at the same cost for preprocessing and storage (up to a constant).
The cumulative sum requires $O(p)$ space, and sampled events are stored as counts in space $O(p)$.
\hideme{
\Note{TGK:We can remove the algorithm and the next sentence if space
  is a problem.}
\cref{alg:sampling_smalln} shows the desired approach for $p<s$.
}
Note that both the binary search and sample counter increments involve random (not contiguous) memory access.
This returns a count vector $\vec{c}$ of length $p$ such that $c_k$ is the number of
occurrences of event $k$ and $\sum c_k = s$.

\hideme{
\begin{algorithm}
  Given probability distribution $\vec{\rho} \in [0,1]^p$ and \\
  $s$ = number of samples
  \begin{algorithmic}[1]
    \State $\bar\rho_0 \gets 0$
    \For{$k=1,\dots,p$}
    \State $\bar\rho_k \gets \bar\rho_{k-1}+\rho_k$
    \Comment cumulative sum
    \EndFor
    \State $\vec{c} \gets 0$ \Comment{sample count $\vec{c} \in \Real^p$}
    \For{$\ell=1,\dots,s$}
      \State $r \gets U(0,1)$
      \State Find $k$ using binary search such that $\bar\rho_{k-1} < 
      r \leq \bar\rho_k$ 
      \State $c_k \gets c_k+1$
    \EndFor
  \end{algorithmic}
  \caption{Standard binary search (for $p<s$)}
  \label{alg:sampling_smalln}
\end{algorithm}
}
If, on the other hand, the number of samples, $s$, is less than the
size of the distribution, $p$, we can avoid the binary search by doing a
single sort of the samples, as shown in \cref{alg:sampling_smalls}.
This is essentially a variation on merge sort, but we sort only one of the two lists and compute the other on the fly.
The preprocessing involves sorting $s$ random numbers,
requiring $O(s\log s)$ comparisons and $O(s)$ space. 
Sampled events are determined by walking through the sorted samples
and the probability distribution, computing the cumulative sums along
the way, requiring $O(n+s)$ computations. 
Sampled events are stored explicitly in space $O(s)$.
Note that all memory accesses in this approach are contiguous (reads and writes).
This returns an explicit sample list $\vec{e}$ of length $s$ such that $e_1 \leq
\cdots \leq e_{\ell}$.

\begin{algorithm}\footnotesize
  Given probability distribution $\vec{\rho} \in [0,1]^p$ and 
  $s$ = \# samples
  \begin{algorithmic}[1]
    \State $r_{\ell} \gets U(0,1)$ for all $\ell \in [s]$
    \State Sort the vector $\vec{r}$ so that $r_1\leq\cdots\leq r_\ell$
    \State $k \gets 1$,
     $\bar \rho \gets \rho_k$ 
    \For{$\ell=1,\dots,s$}
      \While{$r_\ell>\bar\rho$}
        \State $k \gets k+1$,
         $\bar\rho \gets \bar\rho + \rho_k$
      \EndWhile
      \State $e_\ell \gets k$
    \EndFor
  \end{algorithmic}
  \caption{Search via sample sorting (for $s < p$)}
  \label{alg:sampling_smalls}
\end{algorithm}

Thus, the cost of sampling and storing $s$ events from a discrete distribution of size $p$ is $O(s \log (\min\{s,p\}) + p)$ computations and $O(\min\{s,p\})$ space.

\subsection{Diamond sampling with locality optimizations}
\label{sec:LO}

The implementation of \cref{alg:diamond} takes advantage of the
specialized sorting mentioned above as well as locality, as we
explain.

In the preprocessing, in anticipation of the sampling in
\cref{line:ds2}, we compute a matrix $\hat B$ such that each row is a
normalized cumulative sum, i.e., $\hat b_{kj} = \sum_{k'\leq k} b_{k'j}
/ \|\MR{b}{k}\|_1$.  We store the matrix $\hat B$ in compressed sparse
row (CSR) format, so that the entries of $\MR{\hat b}{k}$ are
contiguous in memory.
Similarily, for \cref{line:ds3}, we compute a matrix $\hat A$ such
that each column is a normalized cumulative sum, i.e., $\hat a_{ki} =
\sum_{i'\leq i} a_{ki'} / \|\MC{a}{i}\|_1$.  We store the matrix $\hat
A$ in compressed sparse column (CSC) format, so that the entries of
$\MR{\hat a}{i}$ are contiguous in memory.
Note that storage $\hat A$ in CSC format is equivalent to storage
$\hat A^T$ in CSR format, so we need only one data structure.

We separate lines \cref{line:ds1,line:ds2,line:ds3,line:dx} into four
separate loops, first computing $s$ pairs of the form $(k,i)$, then
$s$ $B$-neighbors, etc.

For the samples in \cref{line:ds1}, we use the search via sample
sorting in \cref{alg:sampling_smalls} for choosing the samples from
$W$ since typically $\nnz{W} = \nnz{A} \gg s$.
Because of the way that \cref{alg:sampling_smalls} works, the pairs
$(k_{\ell}, i_{\ell})$ for $\ell \in [s]$ are conveniently sorted according to $k_{\ell}$.

The sorted values yield data locality for \cref{line:ds2}, where we
use standard binary search to choose the values $j_{\ell}$ for $\ell
\in [s]$.

We rearrange the $s$ samples in $O(s)$ time so that they are ordered
according to $i_{\ell}$. Then we use standard binary search to choose
the values $k'_{\ell}$ for $\ell \in [s]$ from \cref{line:ds3}.

Finally, we reorder the samples in $O(s)$ time so that they are sorted
according to $k'_{\ell}$, enabling efficient lookups for $b_{k'j}$
values in \cref{line:dx}.

\hideme{
Locality optimizations for diamond sampling introduce some overheads;
compared with \cref{alg:diamond}, our implementation involves three extra reorderings.
The random number sorting involves an extra $O(s\log s)$ operations, and the two transposes each require $O(s)$ operations. 
Because these overheads depend only on the number of samples $s$ and not on the size of the input data, we expect the benefits from locality in accessing data arrays of size $\nnz{A}$ and $\nnz{B}$ will far outweigh the overheads.
For small $s$, the overheads should be negligible, and for large $s$, we expect to see more repeated values in the sampling and therefore greater benefit from locality.
}

\cref{fig:perf_LO} shows a 2.7 times speed-up for our optimized
implementation of \cref{alg:diamond} versus a straightforward implementation.
In particular, we note the drastic reduction in time for center, left, and right samples and setting the output entry (which involves searching for the existence of the 4th edge) in the optimized implementation.
This is due to achieving better data locality (i.e., cache performance), and the overheads of the reorderings to attain this locality are amortized.

\begin{figure}
\centering
\includegraphics[width=.8\columnwidth]{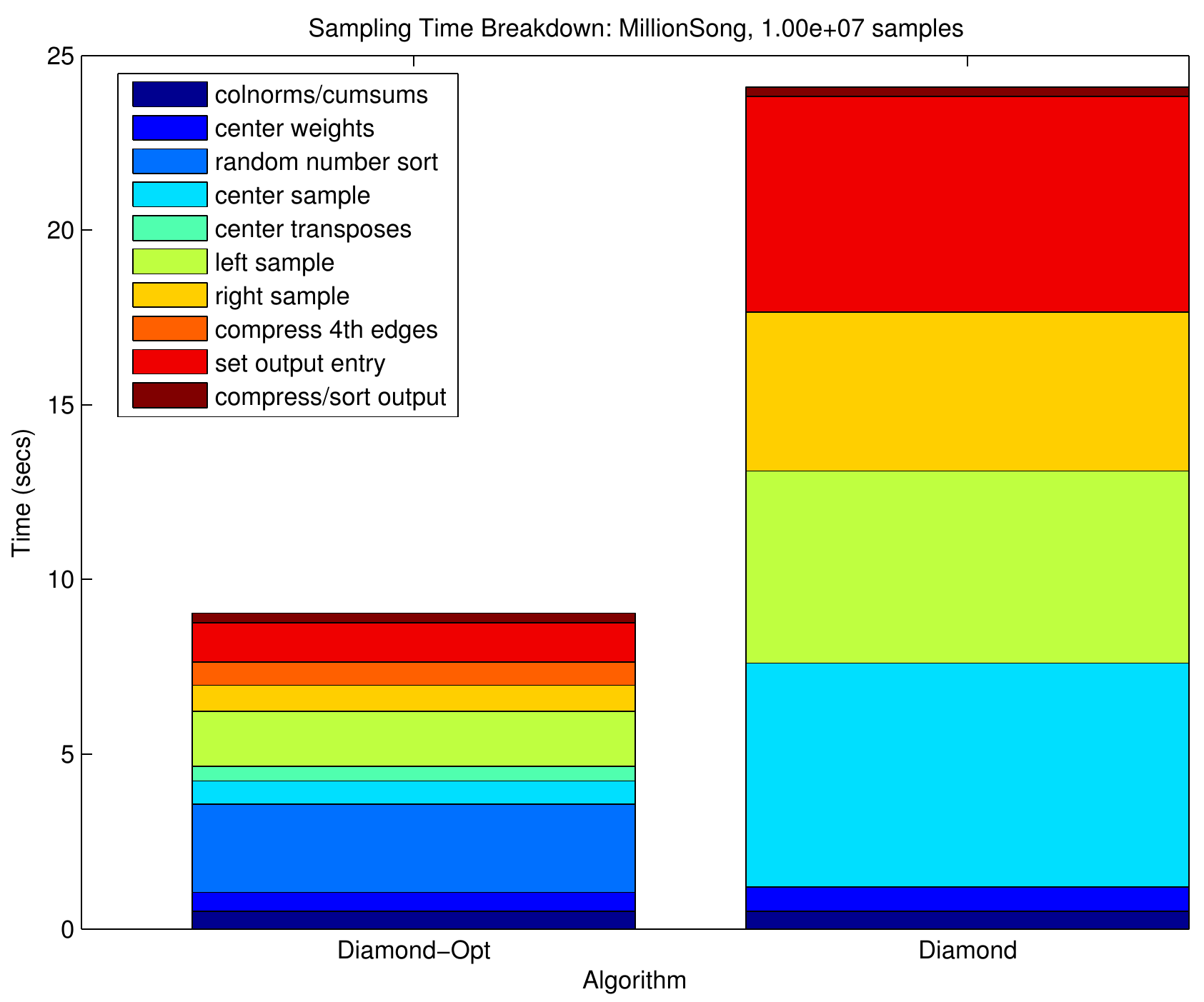}
\caption{\small Time breakdown for optimized and straightforward
  implementations of diamond sampling (\cref{alg:diamond}). 
}
\label{fig:perf_LO}
\end{figure}

\subsection{Exact computation (for comparison)}
\label{sec:exact-impl}

MAD corresponds to find the highest entries in a matrix-matrix
product. General high-performance implementations are available.
In the dense case, the BLAS interface allows access to vendor-tuned
libraries like Intel's Math Kernel Library \cite{MKL14} or NVIDIA's
cuBLAS \cite{CUBLAS}. 
In the sparse case, matrix multiplication is available in CSparse
\cite{Davis06}, an efficient open-source library that is used by
MATLAB for many sparse computations. 
The computational cost of matrix multiplication is $O(mnd)$ in the
dense case (assuming the classical algorithm is used) and $O(\sum_k \Deg_k^A \cdot \Deg_k^B)$, i.e., the number
of wedges in the tripartite graph, in the sparse case. 
The storage cost of library implementations of matrix multiplication,
$mn$ in the dense case and up to $mn$ in the sparse case, is generally
the limiting factor.

To adapt these high-performance libraries, we perform a series of
matrix-vector products, $\vec{c_j} = A^T\vec{b_j}$ for $j \in [n]$, to
compute the columns of $C$ one at a time. 
We do not save the columns but instead use a priority queue to track
the top-$t$ entries. 
Because CSparse is open source, we were able to modify the code to
minimize the memory footprint, achieving $O(\text{storage}(A)+\text{storage}(B)+t)$, with
little loss in performance. 
In the dense case, we compute $C$ in column blocks to size $n \times
d$, computing and processing the output matrix in chunks using the
\texttt{dgemm} interface to MKL, so that the memory footprint is
$O(\text{storage}(A)+\text{storage}(B)+t)$. 

\section{Experiments}
\label{sec:experiments}

All experiments are run on an Intel Xeon E5-2650 ``Ivy Bridge'' 2.0 GHz machine with 32 GB of memory.
Our codes are written in C/C++ with a mex-interface to MATLAB (Version 8.3.0.532). 
The codes are all single-threaded.

\subsection{Datasets}
\label{sec:datasets}

We experiment on  real-world datasets described below.

\begin{itemize}
	\item \emph{as-Skitter $A^TA$}: Skitter is an internet topology
          graph from the ``as-Skitter'' dataset from SNAP
          \cite{snapnets}.  This yields a sparse binary symmetric $n \times
          n$ matrix $A$ with  $n$ = 1,696,415 nodes and 11,095,298 nonzeros.
	\item \emph{Movielens $A^TB$}: The Movielens-10M data set \cite{movielens} comprises a sparse movie-user matrix, $R$, of size $m\times n$ with $m$ = 65,133 movies and $n$ = 71,567 users.  
Following Shrivastava and Li \cite{ShLi14}, who in turn followed \cite{CrKoTu10}, we compute the low-rank SVD of $R$ using $d=150$ components, so that $R \approx A^TB$ where $A \in \Real^{d \times m}$ and $B \in \Real^{d \times n}$ are dense real-valued matrices. 
	\item \emph{Live Journal $A^TA$}: LiveJournal is a free online
          community, and we use the ``soc-LiveJournal1'' dataset from
          SNAP \cite{snapnets}.  The corresponding ``friendship'' sparse symmetric binary
          adjacency matrix $A$ has dimension $n$ = 4,847,571  and 68,993,773 nonzeros.
	\item \emph{ASIC $A^TA$}: The ASIC dataset is a Xyce circuit simulation matrix; we use the ``Sandia/ASIC\_680k'' matrix from the Florida matrix collection \cite{DaHu11}.  The $n \times n$ real-valued matrix $A$ is nonsymmetric (though it is structurally symmetric) with dimension $n$ = 682,862 and 2,638,997 nonzeros.
	\item \emph{Amazon Kindle $A^TB$}: The Amazon product data
          consists of review and product data from Amazon
          \cite{McTaShHe15,amazon}. The Kindle category data includes
          $m$ = 1,406,916 reviewers and $n$ = 430,532 books. The
          reviewer-book rating matrix $A$ contains numeric scores from
          1--5 for reviewed books and has a total of 3,205,546 entries.
          The book-book similarity matrix $B$ contain numeric scores
          of 1--4 to indicate the relationship (i.e., 4 indicates
          the book have been purchased together by someone) with a
          total of 11,012,558 entries.
	\item \emph{Million Song $A^TA$}: The Echo Nest Taste Profile Subset of the Million Song Dataset contains 48M user-song play
counts from real users \cite{BeElWhLa11,McBeElLa12,echonest}. 
The resulting user-song matrix $A$ has 1,019,318 users, 384,546 songs, and 48,373,586 user-song play counts.
\end{itemize}

\subsection{Time and Accuracy Performance}

We present time and accuracy results for six data sets in \cref{fig:topids1,fig:topids2}. 
In this study, we vary $s$ as an axis, set $t'=s$, and plot results for $t\in\set{1,10,100,1000}$.
The top row plots the recall, i.e., the percentage of the top-$t$ entries identified, versus the number of samples, $s$. (Not all samples close to form diamonds; see \cref{tab:stats}.) 
The bottom row plots the wall-clock computation time versus the number of samples, including the time for exact computation as described in \cref{sec:exact-impl}.
Because the budget $t'$ does not depend on $t$ (we set $t'=s$), the timing is the same for all runs.
\Cref{tab:stats} contains some additional data about the sampling, including the size of the largest entry, the size of $\|W\|_1$, the ratio $\|W\|_1/\max c_{ij}^2$ (which is proportional to the number of samples needed to recover the largest entry according to \cref{thm:sep}), and the closure rate of the three paths to form diamonds.

\begin{table}[htpb]
\centering
\begin{tabular}{|c|cccc|} 
\hline 
Dataset & $\max |c_{ij}|$ & $\|W\|_1$ & est. samples & closure \\ \hline
as-Skitter & 3.0e5 & 2.9e12 & 3e3 & 19\% \\
Live Journal & 3.0e3 & 1.5e12 & 1.6e5 & 13\% \\
Movielens & 1.1e0 & 2.8e10 & 2.3e8 & 100\% \\
ASIC & 2e12 & 9.6e19 & 2.4e-5 & 100\% \\
Amazon Kindle & 3.2e3 & 1.2e12 & 1.2e5 & 1.4\% \\
Million Song & 6.1e6 & 1.4e15 & 3.8e1 & 15\% \\
\hline
\end{tabular}
\caption{\small Summary statistics for datasets.  The column labeled ``est. samples'' reports the ratio $\|W\|_1/\max c_{ij}^2$, the estimated number of samples required to find the top entry. The column labeled ``closure'' is the percentage of sampled three paths that correspond to a successful diamond sample.}
\label{tab:stats}
\end{table}

\begin{figure*}[htb]
\centering
\includegraphics[width=.3\textwidth]{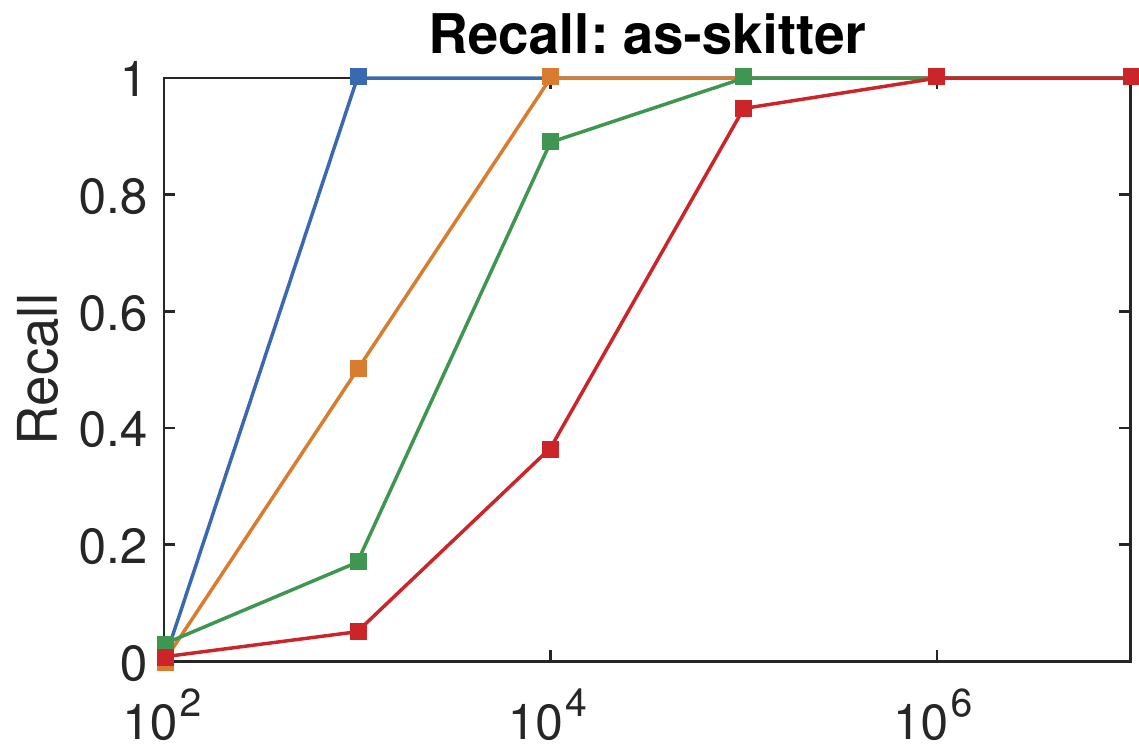} 
\includegraphics[width=.3\textwidth]{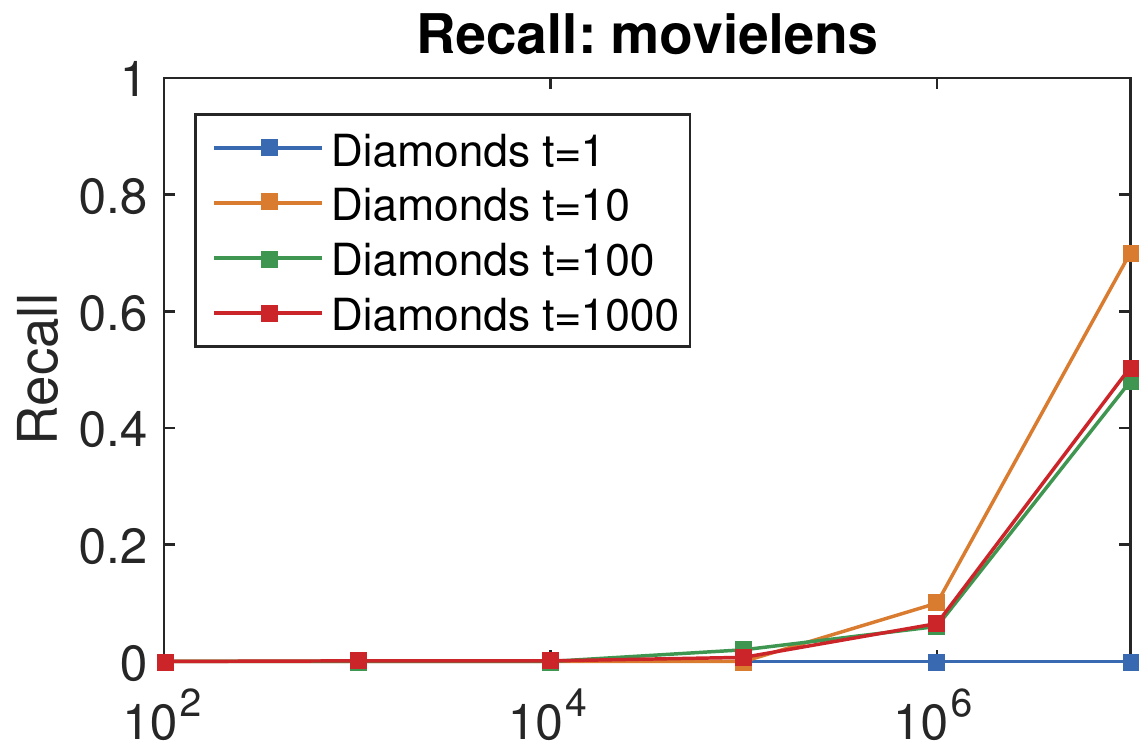} 
\includegraphics[width=.3\textwidth]{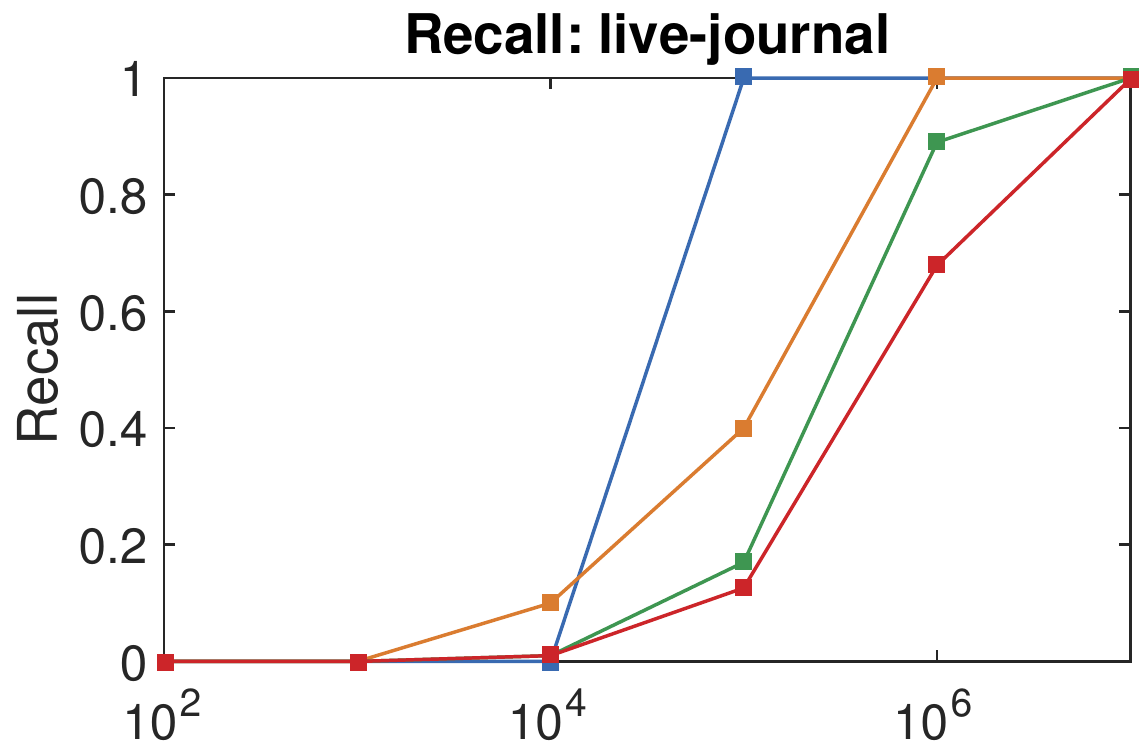} \\
\includegraphics[width=.3\textwidth]{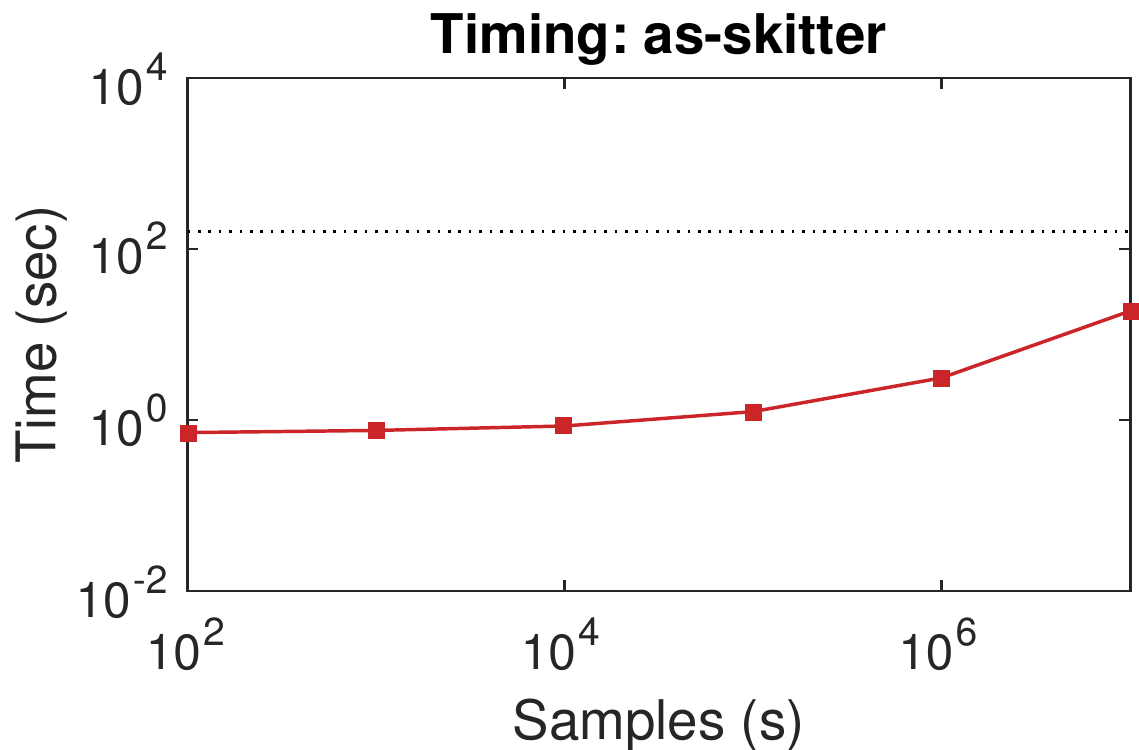}
\includegraphics[width=.3\textwidth]{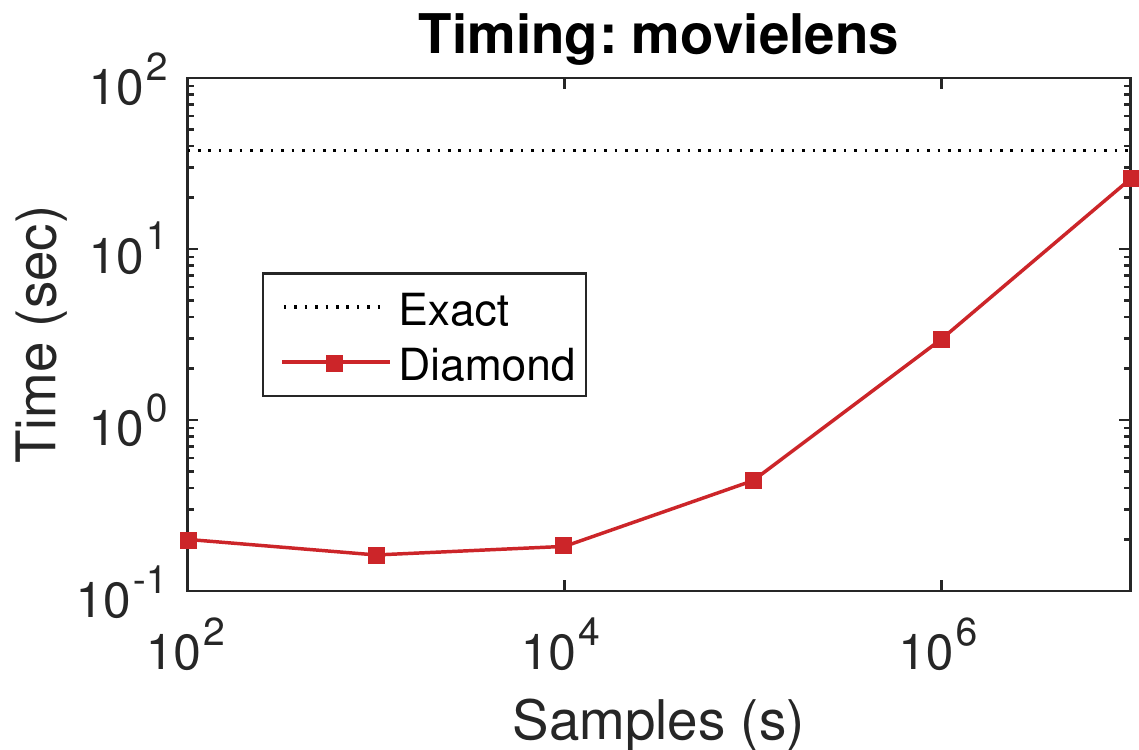} 
\includegraphics[width=.3\textwidth]{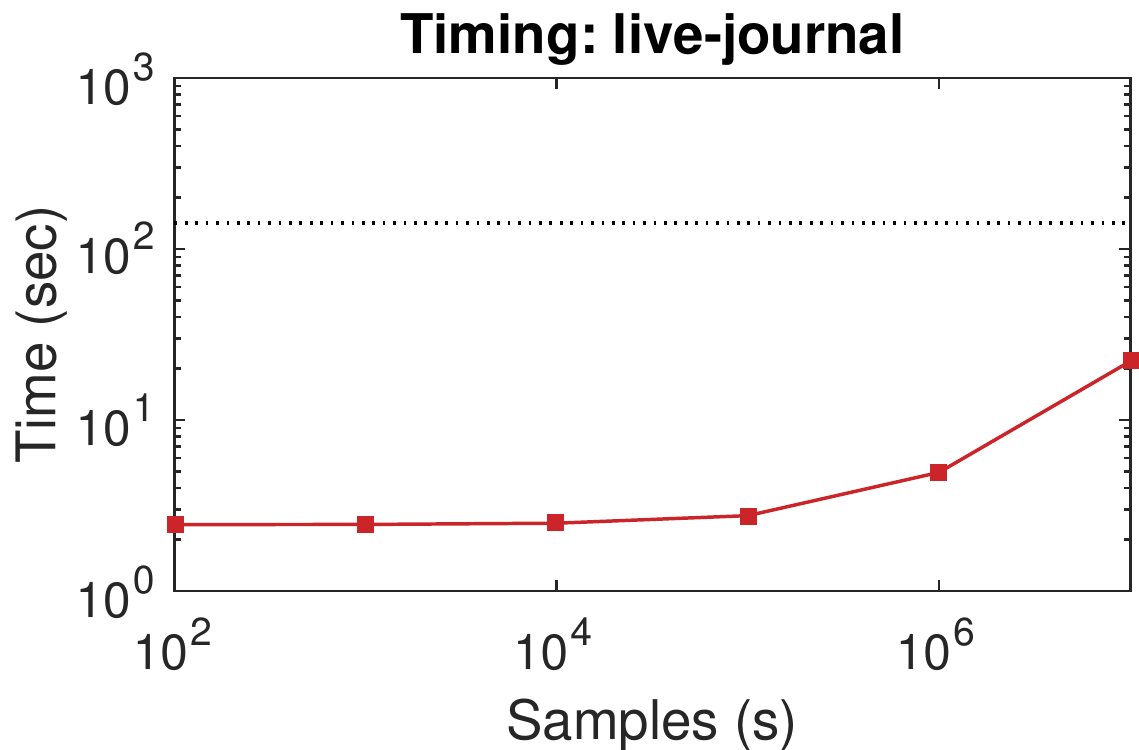} 
\caption{\small Time and accuracy results for sampling approaches for datasets as-Skitter, Movielens, and Live Journal.  The first row of plots presents the top-$t$ scores over various numbers of samples, and the bottom row of plots shows the time in logarithmic scale.}
\label{fig:topids1}
\end{figure*}

\begin{figure*}[htb]
\centering
\includegraphics[width=.3\textwidth]{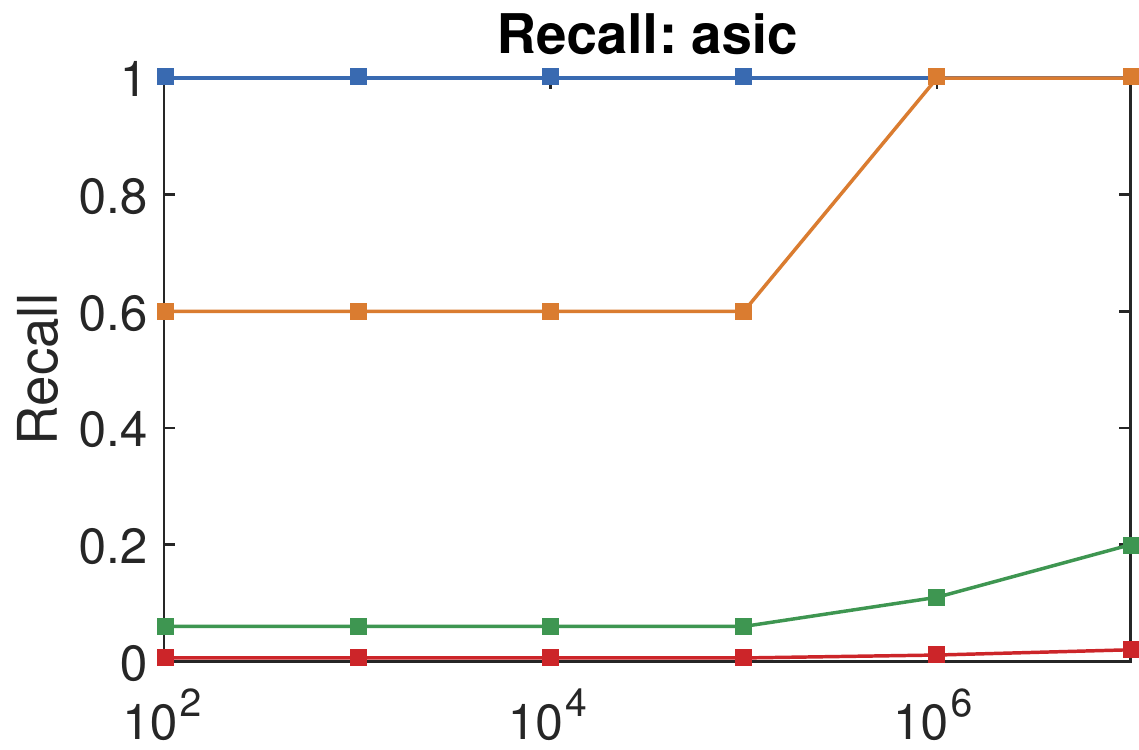} 
\includegraphics[width=.3\textwidth]{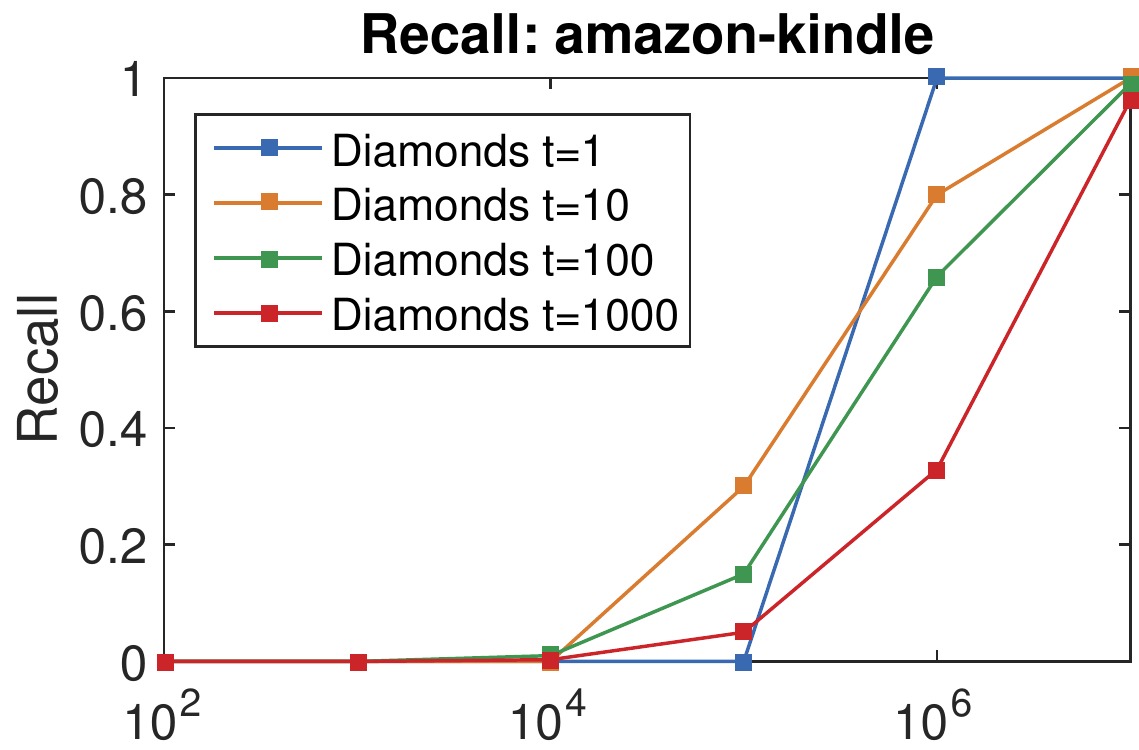} 
\includegraphics[width=.3\textwidth]{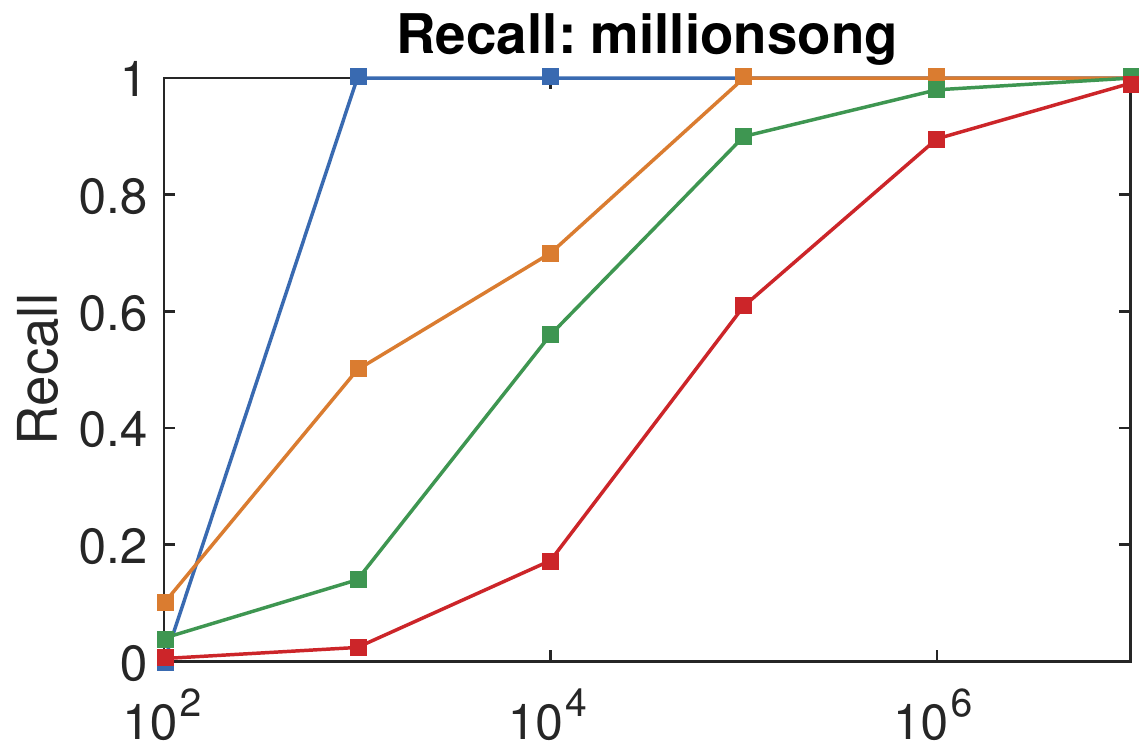} \\
\includegraphics[width=.3\textwidth]{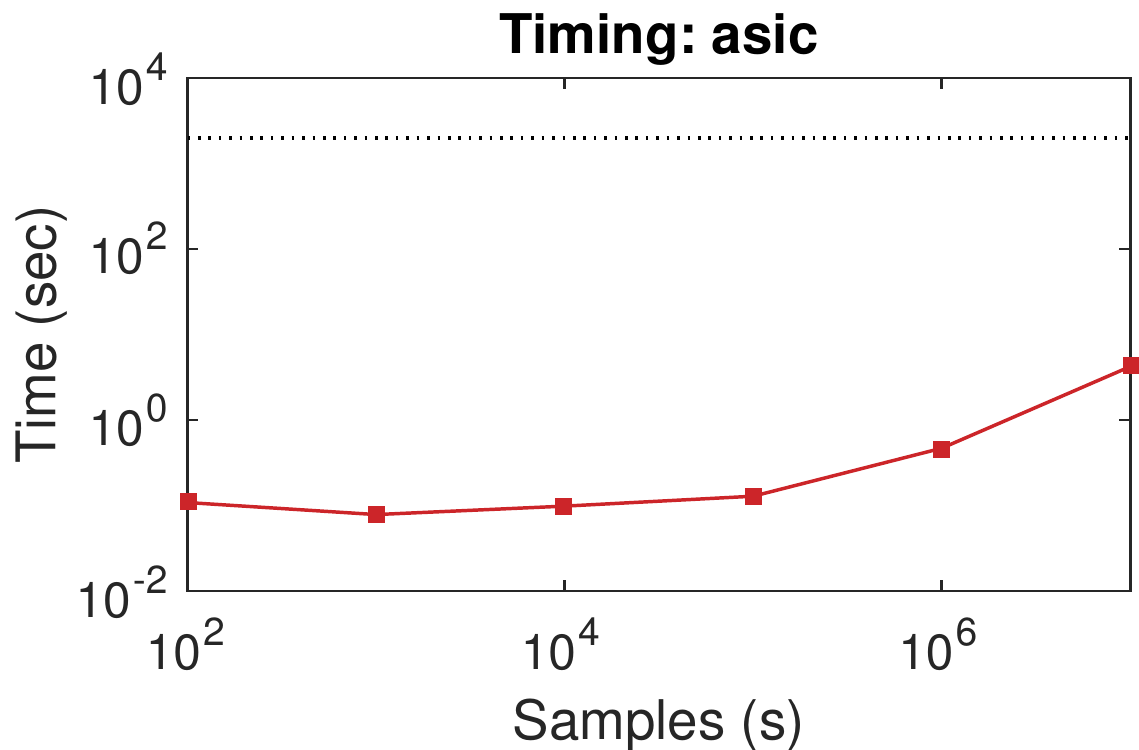}
\includegraphics[width=.3\textwidth]{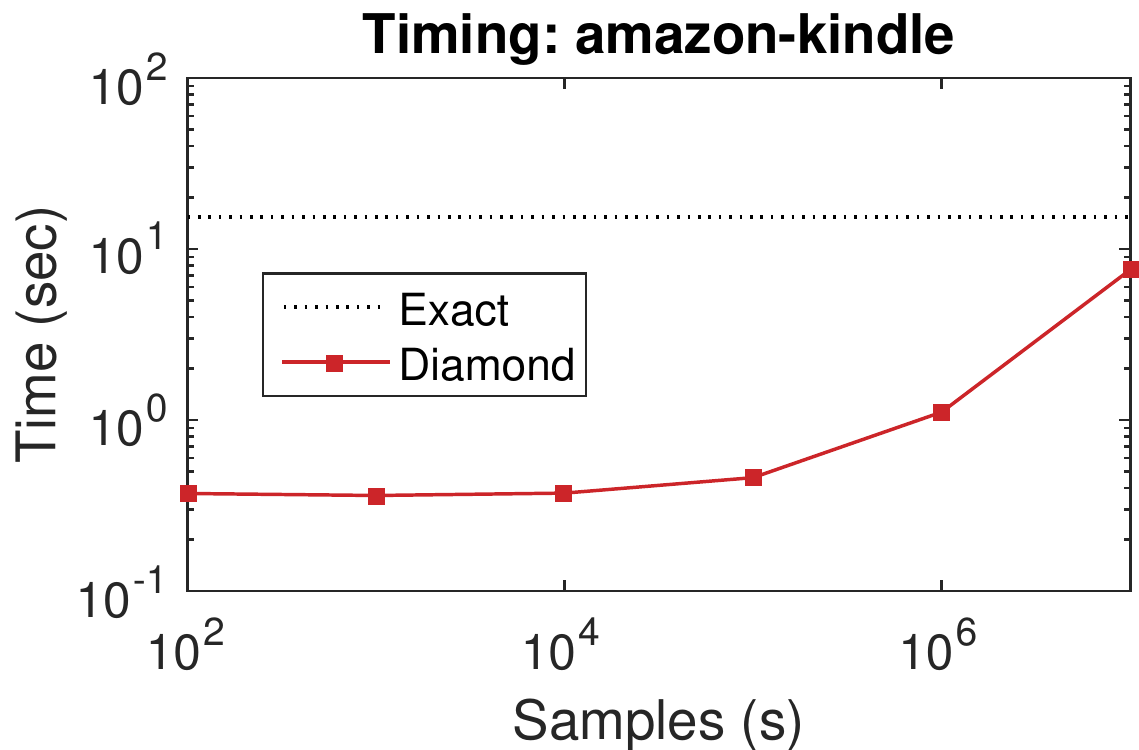} 
\includegraphics[width=.3\textwidth]{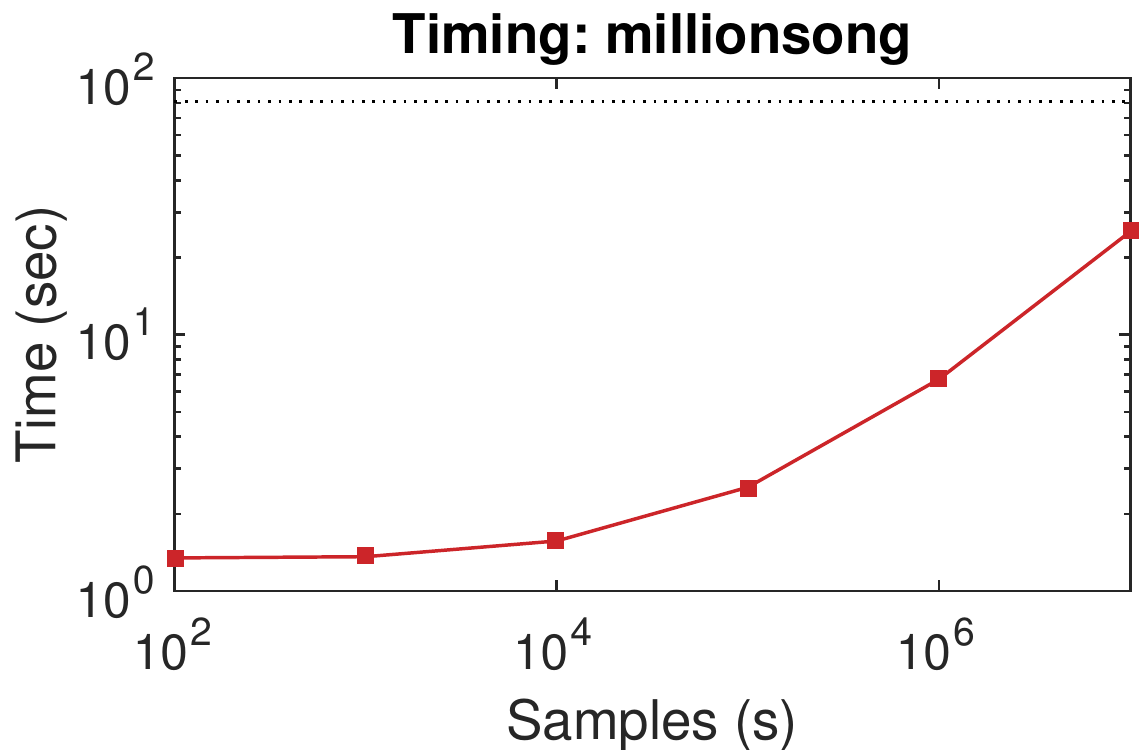} 
\caption{\small Time and accuracy results for sampling approaches for datasets ASIC, Amazon Kindle, and Million Song.  The first row of plots presents the recall scores over various numbers of samples, and the bottom row of plots shows the time in logarithmic scale.}
\label{fig:topids2}
\end{figure*}

For as-Skitter ($A^TA=A^2$ because input matrix $A$ is symmetric; see \cref{fig:topids1}), only $10^5$ samples are required to capture all top 1000 entries in the output matrix, which requires 1.25 seconds (dominated by the preprocessing step).
Exact computation, on the other hand, requires 160 seconds, which is 128 times slower.
The top entry of this matrix has value 30,620, while the number of three paths $\|W\|_1$ is 2.9e12.
The analysis in \cref{sec:analysis} suggests that approximately 3000 samples are required to find the top entry (see \cref{tab:stats}), and we identified the top entry after 1000 samples in this experiment.
Likewise, the 1000th top entry in $A^2$ has value 4,239, the analysis suggests 16K samples, and only 10K were needed for this experiment.
We find all top-1000 entries with only $10^6$ samples, and only 19\% of those samples turn into diamonds.

MovieLens (dense $A^TB$; see \cref{fig:topids1}) is the most difficult dataset because there is not much
differentiation between the largest entries and smaller ones: the
largest entry has magnitude 11.02 while the 1000th largest entry has
magnitude 7.37. Also, the estimated number of samples just to get the top entry is $10^8$; see \cref{tab:stats}.
Here, $10^7$ samples is not sufficient and anything more requires more time than exact computation.
The diamond closure rate is 100\% because the matrices are dense.
This is an example of a relatively small dataset where sampling is not effective; nevertheless, we
still have impressive precision-recall results in \cref{sec:comp-asymm-lsh}.

In LiveJournal (sparse $A^TA$; see \cref{fig:topids1}, we have $||W||_1=1.5 \times 10^{12}$ and the top entry is 2997, so we estimate needing $||W||/c_{ij}^2\approx10^5$ samples to find the largest entry (see \cref{tab:stats}), which is exactly when we find it. We find all top-1000 entries with $10^7$ samples and 10X less time than exact computation.

The ASIC graph (sparse $A^TA$; see \cref{fig:topids2}) comes from scientific computing. Here the largest few entries are very large compared to all others. For instance, the predicted number of samples is $< 1$ in \cref{tab:stats}.
So, we can identify the top-10 but struggle to identify the much smaller entries in the remainder of the top-1000 (10 million samples identifies only 703 distinct output entries). For the top-10, however, we have three orders of magnitude speed-up compared to exact computation.

We can find nearly the top-1000 for Amazon Kindle (sparse $A^TB$, see \cref{fig:topids2}) using $10^7$ samples, but the time is coming somewhat close to exact computation. This is a relatively small problem, and we expect that larger problems will have a more significant benefit. The performance on the recommendation application in \cref{fig:amazon-kindle-rec} yields good results.

We have 10X speed-up for the top-1000 entries for Million Song (sparse $A^TA$, see \cref{fig:topids2}). We find the top entry after 1000 samples, which is a bit higher than the estimate in \cref{tab:stats} of $\approx 100$ samples.

\subsection{Applications of MAD}

\begin{scenario}[Free Samples]\label{free}
  A review site wants to encourage more reviews, so its goal is to
  select a limited number of reviewer-product pairs with the idea that
  the selected reviewer will be given a \emph{free} item to review.
  We let $A$ be a reviewer-by-product matrix and $B$ is a product
  ``also-bought'' matrix. We then pick reviewer-product pairs, with
  the caveat that we should not give a reviewer something that they
  have already reviewed.  
\end{scenario}

We use Amazon purchase data on reviews and product-product relationships \cite{McTaShHe15,amazon}.
We focus on the Amazon Kindle subset, in which case we recommend e-books to be given to particular reviewers to solicit their reviews.
Given the top recommendations for a free sample, we must check that the reviewer has not already reviewed the product; for this data set, about half of the top-$t$ pairs correspond to new products for the particular reviewer.
\cref{fig:amazon-kindle-rec} shows an example top entry: it recommends a romance book to a user that has already reviewed many similar romance novels (three examples selected at random are shown).

\begin{figure}[htb]
\centering
\begin{tabular}{|c|c|} \hline
Rec. Book & Other Books Reviewed by User \\
\includegraphics[width=.1\textwidth]{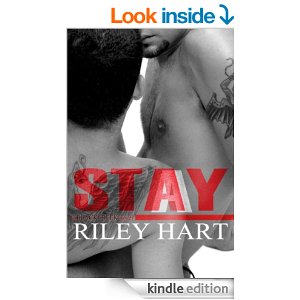}  &
\includegraphics[width=.1\textwidth]{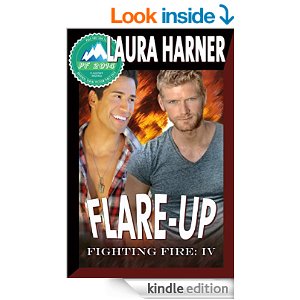} 
\includegraphics[width=.1\textwidth]{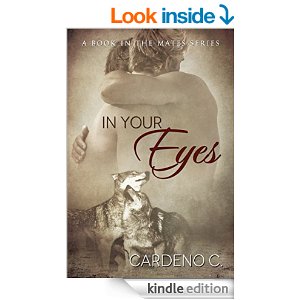} 
\includegraphics[width=.1\textwidth]{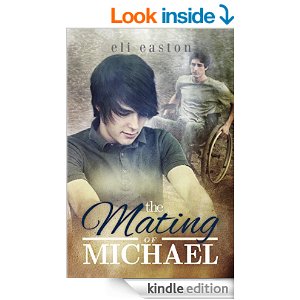} \\ \hline
\end{tabular}
\caption{\small Example top pair from Amazon Kindle dataset: user ID 1317513 and the book entitled ``Stay.''  The user has reviewed 649 other books, three random examples of which are shown.}
\label{fig:amazon-kindle-rec}
\end{figure}

\begin{scenario}[2For1]\label{2for1}
  A retailer wants to select, say, 100 pairs of products for a
  2-for-the-price-of-1 promotion.  If we assume each product has a
  dense or sparse representation in some feature space, this becomes a MAD search with $B=A$. In the case of the Million Song dataset, we let $A$ denote a
  song-by-user matrix where entry $(i,j)$ denotes the number of plays
  of song $i$ by user $j$. We
  want to find pairs of distinct songs that have the highest number of
  common plays. 
\end{scenario}

For this, we find the
top-$t$ pairs of similar songs in the Million Song dataset, based on being
played by the same users. 
We calculate the top-$t$ all-pairs dot
products for columns of $A$ as explained in
\cref{sec:gram}. Additionally, we ignore the diagonal entries that
pair songs with themselves.
The top song pair using this metric is ``Undo'' by Bj\"{o}rk and ``Revelry'' by Kings of Leon, which are both in the alternative genre.

\begin{scenario}[Link Prediction]
  We want to find members of a social network that should be connected but are not. These can be used as recommendations for new ``friends.''
\end{scenario}

We use the Live Journal data.  The top entry in $A^2$ is not an edge
in $A$; 6 out of top 10 are not; 55 out of top 100 are not; and 511
out of top 1000 are not. The top-10 new-connection suggestions have
over 1800 common neighbors per pair.

\subsection{Comparison to wedge sampling by  Cohen-Lewis}
\label{sec:wedge}

The most similar related work is that of Cohen and Lewis \cite{CoLe97,CoLe99}, which we refer to as wedge sampling.
In this section, we compare diamond sampling to wedge sampling and present results for the Skitter dataset, which showed the most distinction between the methods.
We generalized Cohen and Lewis' approach to $t$-MAD and implemented wedge sampling with similar optimizations to those described in \cref{sec:LO}.
In general, using diamonds requires fewer (three-path) samples than wedge samples to identify top entries in the output matrix.
In our implementation, the preprocessing cost for diamonds is greater than for wedges, but the per-sample costs of each method are roughly the same.

In \cref{fig:wedge}, we present top-$t$ scores and times for wedge and diamond sampling on the Skitter dataset.
In these experiments, we set the budget of dot products to be $t'=1000\cdot t$.
For fewer than $10^5$ samples, the diamond sampling has much better accuracy, but because the time is dominated by preprocessing, diamond sampling is also more expensive.
For greater than $10^5$ samples, the time is dominated by sampling and computing dot products, so the running times of diamond and wedge sampling approach are roughly the same.
However, diamond sampling has identified all top entries by $10^5$ samples, while wedge sampling needs $10^6$ or $10^7$ samples to identify all top entries, requiring an order of magnitude more time.

\begin{figure}[htb]
\centering
\includegraphics[width=.49\columnwidth]{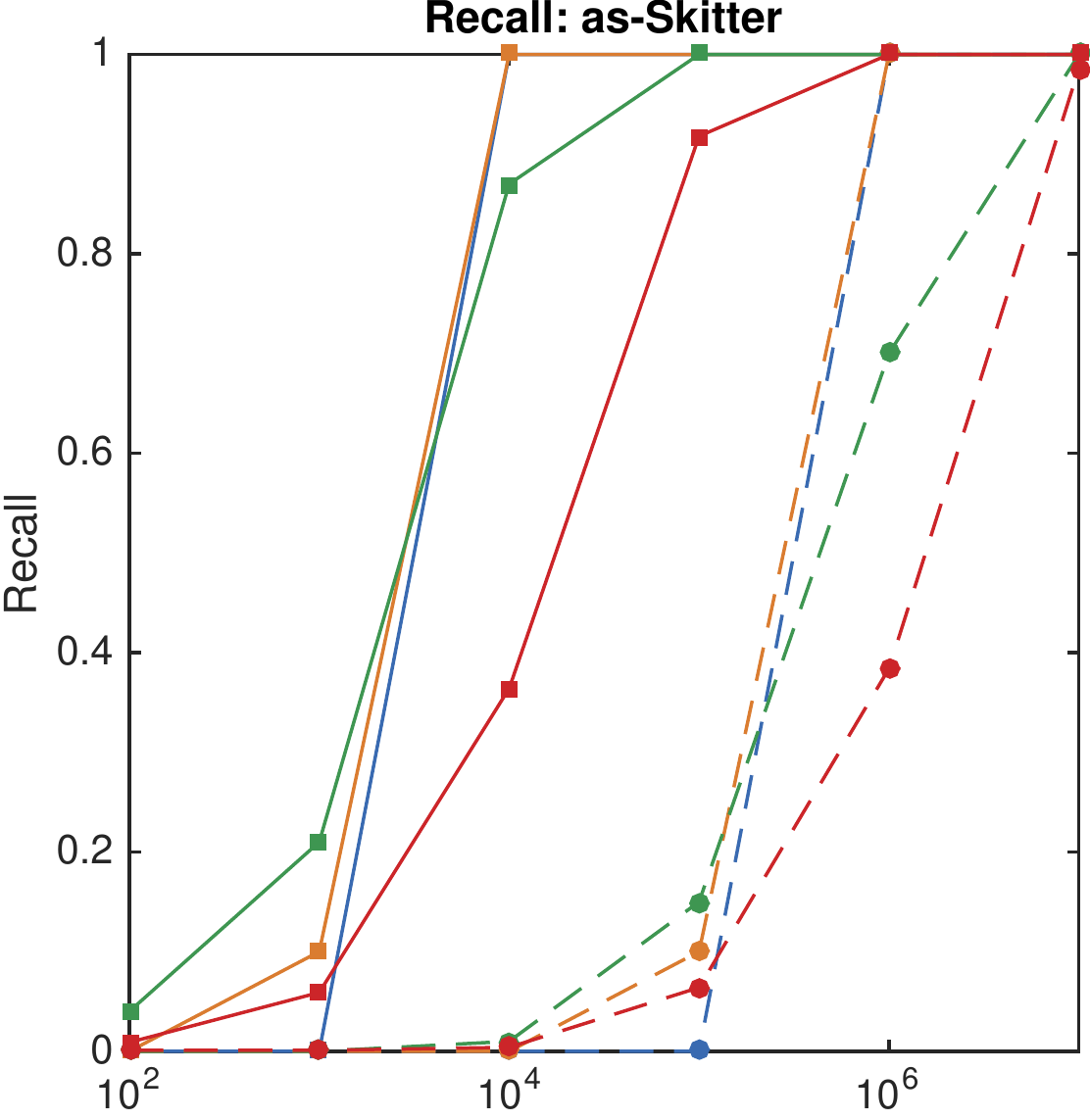}
\includegraphics[width=.49\columnwidth]{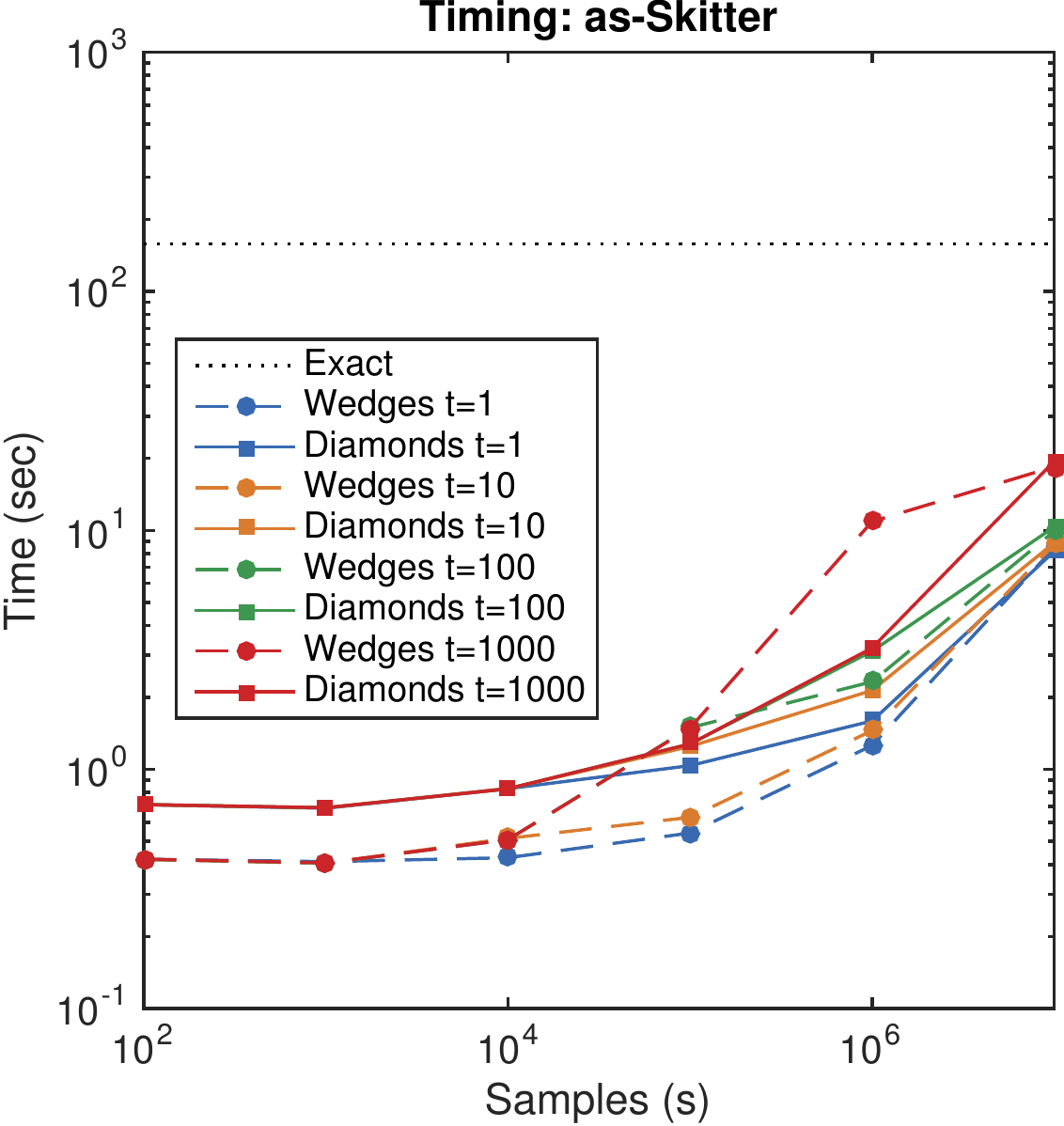}
\caption{Comparison of diamond sampling and wedge sampling.}
\label{fig:wedge}
\end{figure}

\subsection{Comparison to  asymmetric LSH}
\label{sec:comp-asymm-lsh}

We provide a comparison to some experiments in the paper
by Shrivastava and Li \cite{ShLi14} (additional details in
\cite{ShLi14a}) for the Movielens-10M data set \cite{movielens}. 
User $j$ corresponds to column
$\vec{b_j}$. For each user $j$, we want to find the top-10 movie
recommendations.
In other words, for
each user, we want to solve the $k$-MIPS problem.

Precision-recall results over 2000 random users using asymmetric LSH and an
increasing number of hash functions, $h \in \set{64,
128, 256, 512}$, are reported in
\cite{ShLi14,ShLi14a}. 
The amount of storage increases with $h$.
For comparison, we reproduce the curves reports
in their figures, although we did not redo the experiments.

We also pick 2000 random users and apply our diamond sampling approach
in \cref{alg:diamond} repeatedly, using just a \emph{single column} of
$B$ in each application. The $A$ matrix is approximately 79MB in size.
We need to keep one object that is the size of $A$ and a few vectors
of length $m$ or $s$, for a total of less that 100MB extra storage.
While it is hard to pin down the exact storage of LSH methods,
it is on the order of one to two magnitudes more than the dataset.
It is well-known that LSH is memory intensive (this is explicitly called
out in the E2-LSH manual~\cite{LSH-man}).

We use an increasing number of samples,
$s \in \set{64, 128, 256, 512}$.  We set the dot-product budget to be
$t'=s$.  Average precision-recall curves are shown in
\cref{fig:lsh}.
It is difficult to compare the methods directly, so we cannot say that using 64 samples is comparable to using 64 hash functions. However, we can say that the storage per sample is much less than the storage per hash.

\begin{figure}[htbp]
  \centering
  \includegraphics[width=2.5in]{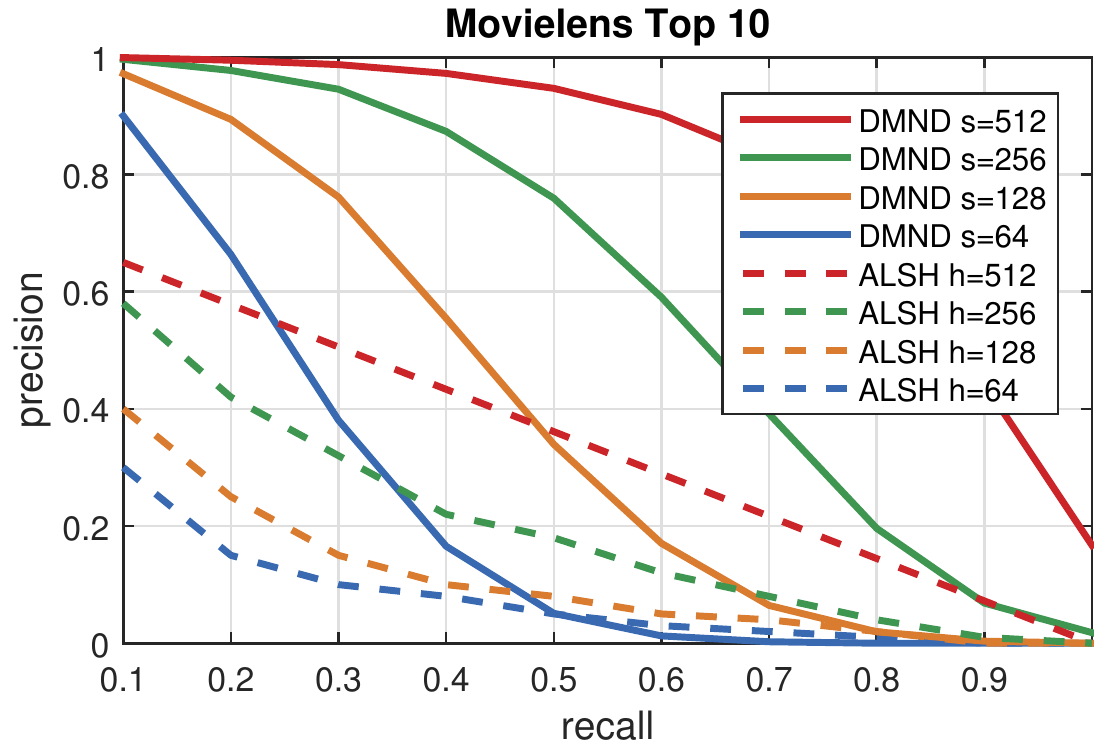}
  \caption{\small Comparison of diamond sampling and asymmetric LSH.}
  \label{fig:lsh}
\end{figure}

 \section*{Acknowledgment}
 Thanks to Madhav Jha and Kevin Matulef for helpful discussions
 regarding this work.
 
 Ballard's work is funded by a Harry S. Truman Postdoctoral Fellowship.
 This material is based upon work supported by the U.S. Department of Energy, Office of Science, Office of Advanced Scientific Computing Research, Complex Interconnected Distributed Systems (CIDS) program and the DARPA GRAPHS program.
 Sandia National Laboratories is a multi-program laboratory managed and operated by Sandia Corporation, a wholly owned subsidiary of Lockheed Martin Corporation, for the U.S. Department of Energy's National Nuclear Security Administration under contract DE-AC04-94AL85000.

\hideme{
\clearpage
\begin{center}
  \Large \color{red} Extra Material, Not To Be Submitted
\end{center}

\section{Previous approach: Wedge sampling}

Our goal is to find the largest entries in $C$ without explicitly
forming it, and wedge sampling \cite{CoLe97,CoLe99} is a suitable
starting point for discussion.  This wedge-sampling method was
introduced in the context of solving the problem where $B$ comprises a
single vector, but the method can be applied to our problem ($B$ is a
matrix) as well. Additionally, we show that the method applies for
general real matrices, not just nonnegative ones.

For motivation, consider the binary case where $A$ and $B$ represent a
tripartite graph.  An illustration is shown in \cref{fig:wedge}: the
red nodes on the left correspond to columns of $A$, the green nodes on
the right correspond to columns of $B$, and the purple nodes in the
center correspond to shared row nodes for both $A$ and $B$.  We only
show the links emanating from nodes $i$ and $j$.  Our goal is to
select wedges uniformly at random where a wedge is a two-path of the
form $i-k-j$ such that $a_{ki} = 1$ and $b_{kj} = 1$.  The number of
wedges connecting $i$ and $j$ is the number of common neighbors
between them, i.e., $c_{ij}$; therefore, if we select wedges uniformly
at random, it can be argued that choosing a wedge with endpoints
$(i,j)$ is proportional to $c_{ij}$.  To pick a wedge uniformly at
random, we first pick a center $k$ proportional to $\Deg^A_k
\Deg^B_k$, followed by random neighbors $i$ and $j$ of $k$ in $A$ and $B$,
respectively. The randomness of the wedge is formalized in \cref{cl:weijk}.

\begin{figure}[htbp]
  \centering
  \begin{tikzpicture}
    \draw (0,1.5) -- (1,1) -- (2,1.75);
    \draw [dashed, very thick] (0,1.5) -- (1,1) -- (2,1.75);
    \draw (0,1.5) -- (1,1.5) -- (2,1.75);
    \draw (0,1.5) -- (1,2) -- (2,1.75);
    \draw (0,1.5) -- (1,0.5);
    \draw (1,2.5) -- (2,1.75);

    \foreach \j in {0.5,1,...,2}
    \node [color=red!25,gnode] at (0,\j) {.};
    \foreach \j in {0,0.5,...,2,2.5}
    \node [color=blue!25,gnode] at (1,\j) {.};
    \foreach \j in {0.25,0.75,...,1.75}
    \node [color=green!25,gnode] at (2,\j) {.};
    \node at (0,1.5) {$i$};
    \node [shape=circle,draw=black,minimum size=4mm,very thick] at (1,1) {};
    \node at (1,1) {$k$};
    \node at (2,1.75) {$j$};
  \end{tikzpicture}
  \caption{\small Illustration of wedge sampling in the binary case. Edges of the left correspond to nonzeros in column $i$ of $A$, and likewise for edges on the right and column $j$ of $B$. The method first chooses an index $k \in [d]$ proportional to the number of wedges it participates in. Next, choose a nonzero entry in the $k$th row of $A$ and likewise for $B$ to select the pair $(i,j)$.}
  \label{fig:wedge}
\end{figure}

We can generalize this idea to real-valued matrices, corresponding to
a tripartite graph with weighted edges.
Our goal is to sample random wedges (uniformly in the binary case, weighted appropriately otherwise), without explicitly
enumerating them.
The wedge sample algorithm for general real-valued $A$ and $B$ is given in \cref{alg:wedge}.  

\begin{algorithm}
  \begin{algorithmic}[1]
    \For{$k=1,\dots,d$}
    \State{\label{line:ww}}$w_k \gets \| \MR{A}{k} \|_1 \| \MR{B}{k} \|_1$
    \EndFor
    \State $X \gets$ all-zero matrix of size $m \times n$
    \For{$\ell=1,\dots,s$}
    \State{\label{line:ws1}}Sample $k$ with probability $w_k /\|w\|_1$
    \State{\label{line:ws2}}Sample $i$ with probability $|a_{ki}| / \| \MR{A}{k} \|_1$
    \State{\label{line:ws3}}Sample $j$ with probability $|b_{kj}| / \| \MR{B}{k} \|_1$
    \State $x_{ij} \gets x_{ij} + \sgn(a_{ki}b_{ki})$
    \EndFor
  \end{algorithmic}
  \caption{MAD wedge sampling}
  \label{alg:wedge}
\end{algorithm}

In the binary case, we choose node $k$ proportional to the product of
its degrees in $A$ and $B$. More, generally, we choose $k$
proportional to the sum of the weights of its edges in $A$ and $B$, so
we define the vector $w \in \Real^d$ as
\begin{displaymath}
  w_k = \| \MR{A}{k} \|_1
  \| \MR{B}{k} \|_1 \qtext{for all} k \in [d].
\end{displaymath}
in \cref{line:ww}.
The neighbors are chosen in a weighted manner as well, in
\cref{line:ws2,line:ws3}.

\subsection{Analysis}

For a single instance of \cref{line:ws1,line:ws2,line:ws3} of
\cref{alg:wedge}, we define the event
\begin{displaymath}
 \mathcal{E}_{ikj} = \text{choosing wedge $i-k-j$}.
\end{displaymath}

\begin{lemma}\label{cl:weijk} 
  $\Prob(\mathcal{E}_{ijk}) =  \| \MC{A}{i} \|_1 \| \MC{B}{j} \|_1 / \|w\|_1$.
\end{lemma}
\begin{proof} We determine the chance of choosing wedge $i-k-j$ by first
  considering the probability of choosing the center $k$ and then the
  chance of picking $i$ and $j$ given that center $k$ has been selected.
  The endpoint selection is independent, i.e., $i$ and $j$ do not
  depend on each other. 
  \begin{align*}
    \Prob(\mathcal{E}_{ijk}) 
    &= \Prob(\text{ctr $k$}) \cdot
    \Prob(\text{endpts $i,j$} | \text{ctr $k$}) \\
    &= \Prob(\text{ctr $k$}) \cdot
    \Prob(\text{endpt $i$} | \text{ctr $k$}) \cdot
    \Prob(\text{endpt $j$} | \text{ctr $k$}) \\
    &= \frac{w_k}{\sum_{k'} w_{k'}} \cdot 
    \frac {|a_{ki}|}{\|\MR{a}{k}\|_1} \cdot 
    \frac {|b_{kj}|}{\|\MR{b}{k}\|_1}\\
    &= \frac{ \| \MR{A}{k} \|_1 \| \MR{B}{k} \|_1}{\|w\|_1} \cdot 
    \frac {|a_{ki}|}{\|\MR{a}{k}\|_1} \cdot \frac {|b_{kj}|}{\|\MR{b}{k}\|_1} 
    = \frac{ |a_{ki}||b_{kj}|}{\|w\|_1}.
  \end{align*}
\end{proof}

\begin{lemma} \label{cl:wexp}
For wedge sampling,
  $\Exp{x_{ij} / s} = c_{ij} / \|w\|_1$.
\end{lemma}
\begin{proof} For iteration $\ell \in [1,s]$, let $X_\ell$ be $\sgn(a_{ki}b_{ki})$
(where $i,j,k$ are the sampled indices for that iteration). Note that $\Exp{x_{ij}/s} = \Exp{\sum_\ell X_\ell/s}
= \sum_\ell \Exp{X_\ell}/s = \Exp{X_1}$. (We used the linearity of expectation, and the fact
that the $X_\ell$s are i.i.d.)
  \begin{align*} 
    \Exp{X_1} 
    & = \sum_{k} \Prob \bigl( \mathcal{E}_{ikj} \bigr) \cdot
    \sgn(a_{ki} \, b_{kj}) \\
    & = \sum_{k} \frac{ |a_{ki}||b_{kj}|}{\|w\|_1} \cdot \sgn(a_{ki}) \cdot \sgn(b_{kj}) \\
    & = \frac{1}{\|w\|_1} \sum_k a_{ki} b_{kj} = \frac{c_{ij}}{\|w\|_1}.
  \end{align*}
\end{proof}

In expectation, the largest entry of $X$ should correspond to the
largest inner product, and we explain the details of using $X$ to
determine the top-$t$ inner products in \cref{sec:post}.

\subsection{Complexity and space}

The cost and storage to compute $w \in \Real^d$ in
\cref{line:ww} are each $O(d)$.  
Likewise for the preprocessing for the sampling in \cref{line:ws1}.
The cost per sample in \cref{line:ws1} is $O(\log d)$.

Assuming that we preprocess all rows in $A$ to prepare for the search
in \cref{line:ws2}, the cost and storage are each $O(\nnz{A})$.
The cost per sample is
$O(\log(\nnz{A}/d))$.
Here we have used the approximation 
$\nnz{\MR{a}{k}} \approx \nnz{A}/d$.
A similar analysis applied for $B$ and
\cref{line:ws3}.

Hence, the total preprocessing cost is
\begin{displaymath}
  O(d+\nnz{A}+\nnz{B}),
\end{displaymath}
and the cost per sample is 
\begin{displaymath}
  O(\log(\nnz{A} \cdot \nnz{B}/d)).
\end{displaymath}

In terms of storage, we need to store $A$, $B$, $w$.
Additionally, we need to store the preprocessed data for the sampling,
which is the same size as $A$, $B$, and $w$.
Finally, we need to store $X$, which has at most $s$ entries.
Therefore, the total storage is
\begin{displaymath}
  O(d + \nnz{A} + \nnz{B} + s).
\end{displaymath}
}

\end{document}